\documentclass{article}
    \author{Dan Betea~\thanks{\texttt{dan.betea@gmail.com}}}
    \title{Correlations for symplectic and orthogonal Schur measures}
    
    \usepackage{amsmath, amsthm, amssymb, xcolor, mathtools, bbm, mathabx}
    \usepackage{hyperref}
    \usepackage[margin=1.0in]{geometry}
    \usepackage{stmaryrd}
    \newtheorem{thm}{Theorem}
    \newtheorem{prop}[thm]{Proposition}
    \newtheorem{cor}[thm]{Corollary}

    \theoremstyle{definition}
    
    \theoremstyle{remark}
    \newtheorem{rem}[thm]{Remark}
    \numberwithin{equation}{section}
    \DeclareMathAlphabet{\mathpzc}{OT1}{pzc}{m}{it}
    \newcommand{\R}{\mathbb{R}}
    \newcommand{\Z}{\mathbb{Z}}
    \newcommand{\N}{\mathbb{N}}
    \newcommand{\C}{\mathbb{C}}
    \newcommand{\Gami}{\Gamma_{-}}
    \newcommand{\Gapl}{\Gamma_{+}}

    \newcommand{\Gapm}{\Gamma_{\pm}}
    \newcommand{\Gsp}{\Gamma_{sp+}}
    \newcommand{\Gspm}{\Gamma_{sp\pm}}
    \newcommand{\Gsm}{\Gamma_{sp-}}
    \newcommand{\Gop}{\Gamma_{o+}}
    \newcommand{\Gopm}{\Gamma_{o\pm}}
    \newcommand{\Gom}{\Gamma_{o-}}
    
    \newcommand{\Par}{\mathcal{P}}
    
    \newcommand{\F}{\mathcal{F}}

    \DeclareMathOperator{\Ad}{Ad}
    % some frequently used bras and kets
    \newcommand{\bra}[1]{\langle #1 |}
    \newcommand{\ket}[1]{| #1 \rangle}
    \newcommand{\vv}{| 0 \rangle}
    \newcommand{\vcv}{\langle 0 |}

    \newcommand{\im }{\mathrm{i}}  % this is the imaginary unit
    
    \newcommand{\dx }{\mathrm{d}}
    
    \newcommand{\tr}{\mathrm{tr}}
    \newcommand{\Ap}{\mathcal{A}^{+}_{2 \to 1}}
    \newcommand{\Am}{\mathcal{A}^{-}_{2 \to 1}}
    \newcommand{\Apm}{\mathcal{A}^{\pm}_{2 \to 1}}
    \let\a\undefined
    \let\b\undefined
    \newcommand{\a}{\mathsf{a}}
    \newcommand{\b}{\mathsf{b}}
    \allowdisplaybreaks

\begin{document}

\maketitle

\abstract{We show, using either Fock space techniques or Macdonald difference operators, that certain symplectic and orthogonal analogues of Okounkov's Schur measure are determinantal with kernels given by explicit double contour integrals. We give two applications: one equates certain Toeplitz+Hankel determinants of random matrix theory with appropriate Fredholm determinants and computes Szeg\H{o} asymptotics for the former; another finds that the simplest examples of said measures exhibit discrete sine kernel asymptotics in the bulk and Airy 2 to 1 kernel---along with a certain dual---asymptotics at the edge. We believe the edge behavior to be universal.}

\section{Introduction}
\label{sec:intro}

The classical Cauchy identities for the symplectic and orthogonal characters, which pair one such function with a Schur polynomial, read as follows---see, e.g.,~\cite{kt,sun2}:
\begin{align} \label{eq:cauchy_intro}
    \sum_{\lambda} sp_{\lambda} (x_1^{\pm 1}, \dots, x_N^{\pm 1}) s_{\lambda} (y_1, \dots, y_N) &= \prod_{1 \leq i < j \leq N} (1 - y_i y_j) \prod_{1 \leq i, j \leq N} \frac{1}{(1-y_i x_j) (1-y_i x_j^{-1})}, \notag \\
    \sum_{\lambda} o_{\lambda} (x_1^{\pm 1}, \dots, x_N^{\pm 1}) s_{\lambda} (y_1, \dots, y_N) &= \prod_{1 \leq i \leq j \leq N} (1 - y_i y_j) \prod_{1 \leq i, j \leq N} \frac{1}{(1-y_i x_j) (1-y_i x_j^{-1})}, \\
    \sum_{\lambda} o_{\lambda} (x_1^{\pm 1}, \dots, x_N^{\pm 1}, 1) s_{\lambda} (y_1, \dots, y_N) &= \prod_{1 \leq i < j \leq N} (1 - y_i y_j) \prod_{1 \leq i \leq N} (1+y_i) \prod_{1 \leq i, j \leq N} \frac{1}{(1-y_i x_j) (1-y_i x_j^{-1})}. \notag
\end{align}
They can be used to construct (a priori complex) probability measures on partitions via
\begin{equation} \label{eq:measures_intro}
    m_{sp}(\lambda) \ \propto \ sp_{\lambda} (X) s_{\lambda} (Y), \qquad m_{o}(\lambda) \ \propto\ o_{\lambda} (\tilde{X}) s_{\lambda} (Y)
\end{equation}
where the alphabets $X = (x_1, x_1^{-1}, \dots, x_N, x_N^{-1})$, $Y = (y_1, \dots, y_N)$ are the parameter sets of said measures; $\tilde{X}$ stands for either $(x_1, x_1^{-1}, \dots, x_N, x_N^{-1})$ or $(x_1, x_1^{-1}, \dots, x_N, x_N^{-1}, 1)$; and the normalization constants can be read from~\eqref{eq:cauchy_intro}. 

The purpose of this note is to show that $m_{sp}$ and $m_{o}$ are determinantal measures with explicit correlation kernels, in analogy with Okounkov's Schur measure~\cite{oko}. We also give two applications: one to the study of certain Toeplitz+Hankel determinants that previously appeared in random matrix theory, and the second of an asymptotic nature where the KPZ-type kernels Airy$_{2 \to 1}$ and a certain dual appear. 

\paragraph{Background.} The Schur measure of Okounkov~\cite{oko}, along with its slightly richer cousin---the Schur process of Okounkov and Reshetikhin~\cite{or}---have, in the past twenty years, provided rich interplay between combinatorics, probability, statistical mechanics, and enumerative geometry. They are measures on partitions or sequences of partitions for which one can compute \textit{all} correlation functions as determinants~\cite{oko, or, br, a}---in particular they yield rich sources of determinantal point processes. In most cases one can use simple steepest descent asymptotic analysis to study thermodynamic limits. This has been achieved for problems such as: distributions of longest increasing subsequences in random permutations~\cite{bdj, br1, br2} and random words~\cite{tw2}; asymptotics of currents in the totally asymmetric simple exclusion process (TASEP) with both wedge and flat initial conditions~\cite{joh2, si, PF04}; last passage percolation with iid geometric or exponential weights~\cite{joh2}; last passage percolation with iid Bernoulli weights, poetically named \textit{oriented digital boiling}~\cite{gtw}; the arctic circle theorem for the Aztec diamond~\cite{bbccr}---but the approach has been implicitly used prior by Johansson~\cite{joh}; $z$-measures in the representation theory of the infinite symmetric group~\cite{oko4}; plane partitions~\cite{or}; and many others not listed here due to space constraints. We refer the reader to the review articles of Borodin--Petrov~\cite{bp2} and Borodin--Gorin~\cite{bg} for probabilistic/statistical mechanical points of view; and to the (now surely dated) review of Okounkov~\cite{oko3} for connections to enumerative geometry. 

In this note we discuss certain ``type B, C and D'' analogues of Schur measures, the ones mentioned in~\eqref{eq:measures_intro}. They are determinantal with simple correlation kernels. We give two applications, as described below.  

\paragraph{Main results.} We now outline the main results proved or reproved in this paper. We do so in words as much as possible and refer the reader to the respective sections for the precise statements. 
\newline

\textbf{Theorem A.} The measures $m_{sp/o}$ from~\eqref{eq:measures_intro} are determinantal with explicit correlation kernels $K_{sp/o}$ given by double contour integral formulae. 
\newline

This is the main result of the present paper; it is the content of Theorem~\ref{thm:corr_char} for the case of variables and of Theorem~\ref{thm:corr_lifted} for the case of arbitrary specializations. There is also a dual case for variables, Corollary~\ref{thm:corr_char_dual}.

We next consider the implications of Theorem A to certain Toeplitz+Hankel determinants which appeared in random matrix theory~\cite{joh5} and the study of symmetrized longest increasing subsequences~\cite{br1} to name but two examples---see also Remark~\ref{rem:det_history}. For that, fix a function $f(z) = \exp(R_+(z) + R_-(z)) = \sum_{k \in \Z} f_k z^k$ with $R_{\pm}(z) = \sum_{k \geq 1} \rho_k^{\pm} z^{\pm k}$ for $(\rho^{\pm}_k)_{k \geq 1}$ appropriately chosen numbers. Denote $\check{f}(z) = 1/f(-z)$ having Fourier coefficients $\check{f}_k, k \in \Z$. Consider the following Toeplitz+Hankel determinants:
\begin{equation}
    \begin{split}
        D^1_{n} = \det [f_{i-j} + f_{i+j}]_{0 \leq i,j \leq n-1}, \qquad D^2_{m} = \det [\check{f}_{i-j} - \check{f}_{i+j+2}]_{0 \leq i,j \leq m-1}, \\ 
        D^3_{n} = \det [f_{i-j} - f_{i+j+2}]_{0 \leq i,j \leq n-1}, \qquad D^4_{m} = \det [\check{f}_{i-j} + \check{f}_{i+j}]_{0 \leq i,j \leq m-1}.
        \end{split}
\end{equation}  

We can write these determinants as certain observables for the measures $m_{sp/o}$. 
\newline

\textbf{Theorem B.} We have the following Gessel-like formulae:
\begin{equation}
    \begin{split}
        \frac{1}{2} D^1_n = \sum_{\lambda: \ell(\lambda) \leq n} sp_{\lambda} (\rho^+) s_{\lambda} (\rho^-), \qquad D^2_m = \sum_{\lambda: \lambda_1 \leq m} sp_{\lambda} (\rho^+) s_{\lambda} (\rho^-), \\
        D^3_n = \sum_{\lambda: \ell(\lambda) \leq n} o_{\lambda} (\rho^+) s_{\lambda} (\rho^-), \qquad  \frac{1}{2} D^4_m = \sum_{\lambda: \lambda_1 \leq m} o_{\lambda} (\rho^+) s_{\lambda} (\rho^-).
        \end{split}
\end{equation}
\newline

This is the content of Theorem~\ref{thm:gessel} and we use $f(\rho)$ for a symmetric function specialized at $\rho$: replacing the Newton powersums $k^{-1} p_k$ by the numbers $\rho_k$.

We next have an asymptotic Szeg\H{o} result.
\newline

\textbf{Theorem C.} We have the Szeg\H{o}-like asymptotic formulae:
\begin{equation}
    \begin{split}
        \lim_{n \to \infty} \frac{1}{2} D^1_n = \lim_{m \to \infty} D^2_m &= \exp \sum_{k \geq 1} \left( k \rho^+_k \rho^-_k + \rho^-_{2k} - \frac{k(\rho^-_k)^2}{2} \right), \\
        \lim_{n \to \infty} D^3_n = \lim_{m \to \infty}  \frac{1}{2}  D^4_m &= \exp \sum_{k \geq 1} \left( k \rho^+_k \rho^-_k - \rho^-_{2k} - \frac{k(\rho^-_k)^2}{2} \right).
    \end{split}
\end{equation}
\newline

This is the content of Theorem~\ref{sec:szego} and for us it is a simple consequence of the Cauchy identities~\eqref{eq:cauchy_intro} and their generalizations to arbitrary specializations. Using different methods, the result was first proved by Johansson~\cite{joh5}, and reproved subsequently by Basor--Ehrhardt, Dehaye and Deift--Its--Krasovsky in~\cite{be1, be2, be3, be4, be5, deh, dik}. 

We also have a Borodin--Okounkov-type~\cite{bo} result of the form Toeplitz+Hankel = Fredholm---see also~\cite{tw2}.
\newline

\textbf{Theorem D.} The Toeplitz+Hankel determinants $D_m^2$ and $\frac{1}{2} D^4_m$ are proportional to Fredholm determinants of $K_{sp}$ and $K_o$ respectively. 
\newline

This is the content of Theorem~\ref{thm:bo}. We believe that, up to a reformulation, computation and somewhat different assumptions, this result first appeared in~\cite{be4}.

We finally remark that all of the aforementioned results are stated analytically but proven using essentially algebraic/combinatorial techniques. Thus a) it is entirely possible that the analytical conditions we impose for convergence can be relaxed; and b) they all have formal power series analogues which we do not discuss but see, e.g.,~\cite{bo} for a discussion of this formal aspect. 

As a different type of application, for a positive real $\theta$, we consider the measures 
\begin{equation}
    P_{sp} (\lambda) \ \propto \ sp_{\lambda} (pl_{2 \theta}) s_{\lambda} (pl_{\theta}), \qquad P_{o} (\lambda) \ \propto \ o_{\lambda} (pl_{2 \theta}) s_{\lambda} (pl_{\theta})
\end{equation}
where $pl_{\theta}$ is the Plancherel specialization sending the first powersum $p_1$ to $\theta$ and all the rest to 0. These are \textit{signed} measures whose bulk and edge asymptotic behavior (as $\theta \to \infty$) we analyze in detail. They are somewhat contrite analogues of the Poissonized Plancherel measure on partitions~\cite{oko}. Nevertheless, at least at the edge, the behavior these measures exhibit should be universal and common to all symplectic/orthogonal Schur measures having certain properties. We prove the following two theorems.
\newline

\textbf{Theorem E.} Asymptotically, the bulk behavior of $P_{sp/o}$ is governed by the discrete sine kernel.
\newline

\textbf{Theorem F.} Asymptotically, the edge behavior of $P_{sp/o}$ is governed by the Airy$_{2 \to 1}$ kernel for $P_{sp}$ and by \textit{a certain dual} for $P_o$.
\newline

These are the content of Theorem~\ref{thm:bulk_scaling} (for the bulk) and of Theorems~\ref{thm:edge_scaling} and~\ref{thm:edge_scaling_2} (for the edge).

\paragraph{Outline.} In Section~\ref{sec:prelim} we define most of what we need. The symmetric function theory is recalled in Section~\ref{sec:schur}, while Fock space rudiments are recalled in Section~\ref{sec:fermions}. In Section~\ref{sec:corr} we prove Theorem A in all its variants. For the case of alphabets (variables), we give two proofs: a self-contained one based on Fock space calculations in~\ref{sec:corr_char} and a sketch based on recent work of Ghosal~\cite{gho17} based on Macdonald difference operators in~\ref{sec:corr_another_proof}. Finally we state and prove the case of general specializations in~\ref{sec:corr_lifted}. Section~\ref{sec:gboth} covers the Gessel-like and Szeg\H{o}-like Theorems B and C in~\ref{sec:szego} and the Borodin--Okounkov-like Theorem D in~\ref{sec:bo}. Section~\ref{sec:asymptotics} is devoted to the asymptotic analysis of the bulk---Theorem E---and edge---Theorem F---of the aforementioned analogues of the Poissonized Plancherel measure. We conclude, with an outlook for future work, in Section~\ref{sec:conclusion}.

\paragraph{Notations.} Throughout, we will use the notation $\Z' := \Z + \frac{1}{2}$ whenever dealing with Fock space calculations. $\{a,b\}:=ab+ba, [a,b]:= ab-ba$ are used to denote anti-commutators and commutators respectively. $A:=(a_1, a_2, \dots), B := (b_1, b_2, \dots), Y := (y_1, y_2, \dots)$ will denote \textit{alphabets}, a pedantic name for collections of variables (or, in physicists' language, for \textit{spectral parameters}). We will need finite alphabets for \textit{BC-symmetric} Laurent polynomials of the form $X:= (x_1, x_1^{-1}, \dots, x_N, x_N^{-1})$, and we reserve $X$ for this throughout. The letter $\rho$ will always denote a specialization of the algebra of symmetric functions $\Lambda$.  

\paragraph{Acknowledgements.} The author is grateful to Estelle Basor, J\'er\'emie Bouttier, Elia Bisi, Alexei Borodin, Sasha Bufetov, L\'aszl\'o Erd\H{o}s, Patrik Ferrari, Christian Krattenthaler, Peter Nejjar, Anita Ponsaing, Eric Rains, Arun Ram, Craig Tracy, Mirjana Vuleti\'c, Paul Zinn-Justin and Nikos Zygouras for illuminating conversations and references to the literature. He is most grateful to Michael Wheeler for kick-starting this project, for partly funding it, and especially for enduring, during numerous caffeinated conversations, the author's kibitzing on blitz chess\footnote{In particular, it is in the interest of the whole mathematical community that MW's blitz chess rating does not exceed two-thousand points by year's end.} and on early explorations into Fock space. Finally remerciements should go to the Mathematics and Statistics Department at the University of Melbourne, where part of the research has been conducted during the author's antipodal visit in early two-thousand eighteen.

\section{Schur, symplectic, orthogonal and free fermions}
\label{sec:prelim}

\subsection{Schur functions, symplectic and orthogonal characters}
\label{sec:schur}

In this section we gather some generalities on Schur functions and on symplectic/orthogonal characters.

A \textit{partition} is a non-increasing sequence of non-negative integers $\lambda_1 \geq \lambda_2 \geq \dots$, called \textit{parts}, only finitely many of which are non-zero. It has \textit{size} $|\lambda| = \sum_i \lambda_i$ and its \textit{length}, denoted $\ell(\lambda)$, is the number of (non-zero) parts. For a partition $\lambda$, its \textit{conjugate}, denoted $\lambda'$, is defined by $\lambda'_i = |\{ j: \lambda_j \leq i \}|$.

We denote by $\Lambda$ the algebra of symmetric functions in a countable alphabet $Y=(y_1, y_2, \dots)$. Algebraic bases are given by the elementary and complete homogeneous symmetric functions $e_k(Y), h_k(Y), k \geq 0$, themselves defined by their generating series
\begin{equation}
    \sum_{i \geq 0} h_k(Y) t^k = \prod_{i} \frac{1}{1-t y_i} := H(Y; t), \qquad \sum_{i \geq 0} e_k(Y) t^k = \prod_{i} (1+t y_i) := E(Y; t).
\end{equation}
Note $H(Y; t) E (Y; -t) = 1$. Another useful basis is that of the Newton powersums $p_k = \sum_i y_i^k, k \geq 1$ with generating series 
\begin{equation}
    \exp \sum_{k \geq 1} \frac{p_k(Y) t^k}{k} = H(Y; t).
\end{equation}

Given alphabets $Y=(y_1, y_2, \dots), Z=(z_1, z_2, \dots)$ we use the notation
\begin{equation}
H(Y; Z) := \prod_{i,j} \frac{1}{1-y_i z_j} = \exp \sum_{k \geq 1} \frac{p_k(Y) p_k(Z)}{k}, \ E(Y; Z) := \prod_{i,j} \frac{1}{1-y_i z_j} = \exp \sum_{k \geq 1} \frac{(-1)^{k+1} p_k(Y) p_k(Z)}{k}
\end{equation}
and we also reserve the following for later purposes:
\begin{equation}
    h_{sp}(Y) := \prod_{i < j} (1-y_i y_j), \qquad h_{o}(Y) := \prod_{i \leq j} (1-y_i y_j).
\end{equation}

\textit{Schur functions} $s_{\lambda}(Y)$ form yet another linear basis of $\Lambda$ and are defined, for now and our purposes, by the Jacobi--Trudi formulae:
\begin{equation}
    \label{eq:s_jt}
    s_{\lambda}(Y) = \det [h_{\lambda_i-i+j}(Y)]_{1 \leq i, j \leq \ell(\lambda)} = \det [e_{\lambda'_i-i+j}(Y)]_{1 \leq i, j \leq \lambda_1}.
\end{equation}
Using \eqref{eq:s_jt} one could actually define \textit{skew} Schur functions as follows: for a partition $\mu \subseteq \lambda$, $s_{\lambda / \mu}(Y) = \det [h_{\lambda_i-i-(\mu_j-j)}(Y)]_{1 \leq i, j \leq \ell(\lambda)}$. When $\mu = 0$ is the \textit{empty partition}, $s_{\lambda/0} = s_{\lambda}$. For more on symmetric functions, the reader should consult Chapter I of the scriptures~\cite{mac}.

Let $X = (x_1, x_1^{-1}, \dots, x_N, x_N^{-1})$ be a finite alphabet containing variables and inverses, and let $\tilde{X}$ stand for either $X$ or $(X, 1):=(x_1, x_1^{-1}, \dots, x_N, x_N^{-1}, 1)$. Given a partition $\lambda$ of length $\leq N$, we define the \textit{$BC$-symmetric} (symmetric under permuting and inverting variables) \textit{symplectic and orthogonal characters} via the following Jacobi--Trudi formulae:
\begin{equation}
    \label{eq:sp_o_jt}
    \begin{split}
        sp_{\lambda}(X) &= \frac{1}{2} \det [h_{\lambda_i - i + j}(X) + h_{\lambda_i - i - j + 2}(X)]_{1 \leq i, j \leq \ell(\lambda)} \\
        & = \det [e_{\lambda'_i - i + j}(X) - e_{\lambda'_i - i - j}(X)]_{1 \leq i, j \leq \lambda_1}, \\
        o_{\lambda}(\tilde{X}) &= \det [h_{\lambda_i - i + j}(\tilde{X}) - h_{\lambda_i - i - j}(\tilde{X})]_{1 \leq i, j \leq \ell(\lambda)} \\
        & = \frac{1}{2} \det [e_{\lambda'_i - i + j}(\tilde{X}) + e_{\lambda'_i - i - j + 2}(\tilde{X})]_{1 \leq i, j \leq \lambda_1}.
    \end{split}
\end{equation}
They are the irreducible characters of the symplectic and orthogonal groups where the $x$'s should be viewed as eigenvalues of the corresponding conjugacy classes of matrices\footnote{In the case of the orthogonal groups, the situation depends on whether the group has even or odd rank; we use $\tilde{X}$ to stand for eigenvalues from both.}. For more on these aspects, we refer the reader to Chapters 24 and onwards of~\cite{fh}, as well as the review~\cite{sun2} and references therein.

Yet another way to define $sp$ and $o$ is in terms of the expansion in \textit{skew} Schur functions. To wit, one has
\begin{equation} \label{eq:sp_o_s}
    sp_{\lambda}(X) = \sum_{\alpha} (-1)^{|\alpha|/2} s_{\lambda / \alpha}(X), \qquad o_{\lambda}(\tilde{X}) = \sum_{\beta} (-1)^{|\beta|/2} s_{\lambda / \beta}(\tilde{X})
\end{equation}
where $\alpha$ ranges over all partitions having \textit{Frobenius coordinates} $\alpha = (a_1, a_2, \dots|a_1+1, a_2+1, \dots)$; $\beta$ ranges over all partitions having Frobenius coordinates $\beta = (b_1+1, b_2+1, \dots | b_1, b_2, \dots)$; the empty partition is in both sets; and for a partition $\lambda = (a_1, \dots | b_1, \dots)$ we recall the Frobenius coordinates are $(a_i, b_i) = (\max(\lambda_i - i, 0), \max(\lambda'_i - i, 0))$.

The important properties, for probabilistic purposes, that these functions satisfy are the Cauchy and dual Cauchy identities---see, e.g.,~\cite{kt, sun2}:
\begin{equation}
    \label{eq:sp_o_cauchy}
    \begin{split}
    \sum_{\lambda} sp_{\lambda}(X) s_{\lambda}(Y) = h_{sp} (Y) H(X; Y), \qquad \sum_{\lambda} sp_{\lambda}(X) s_{\lambda'}(Y) = h_{o} (Y) E(X; Y), \\
    \sum_{\lambda} o_{\lambda}(\tilde{X}) s_{\lambda}(Y) = h_{o} (Y) H(\tilde{X}; Y), \qquad \sum_{\lambda} o_{\lambda}(\tilde{X}) s_{\lambda'}(Y) = h_{sp} (Y) E(\tilde{X}; Y) \\
    \end{split}
\end{equation}
which are analogues of the well-known~\cite{mac} identities for Schur functions:
\begin{equation}
    \sum_{\lambda} s_{\lambda}(Y) s_{\lambda}(Z) = H(Y; Z), \qquad \sum_{\lambda} s_{\lambda}(Y) s_{\lambda'}(Z) = E(Y; Z).
\end{equation}

The symplectic and orthogonal characters admit lifts to the algebra of symmetric functions. If $Z$ is a normal alphabet in a countable number of variables (i.e., not containing any inverses), then $sp_{\lambda}(Z)$ and $o_{\lambda}(Z)$ are defined either by equation~\ref{eq:sp_o_jt} by replacing $X/\tilde{X}$ with $Z$, or by expanding $h_{sp} (Y) H(Z; Y)$ and $h_{o} (Y) H(Z; Y)$ and picking the coefficients of $s_{\lambda}(Y)$. We call these lifts the \textit{symplectic and orthogonal Schur functions}. They satisfy the natural property that if one then sets $z_{2i-1} = x_i, z_{2i} = x_i ^{-1}, 1\leq i \leq N$ and $z_j = 0, j > 2N$ then $sp_{\lambda}(Z)$ becomes the $BC$-symmetric symplectic character $sp_{\lambda}(X)$ and similarly for $o$. 

Finally we can generalize the above a bit further by forgetting the variables $X, Y, Z$ and noting these functions could be defined entirely in terms of the $e$'s, $h$'s or $p$'s. To wit, a \textit{specialization} $\rho$ is an algebra homomorphism $\rho: \Lambda \to \C$. For $f \in \Lambda$ we write $f(\rho)$ en lieu of $\rho(f)$. We note specifying $\rho$ is equivalent to specifying either of the three collections $h_{k}(\rho), e_{k}(\rho), p_{n}(\rho), k \geq 0, n \geq 1$ (with the convention that $h_0(\rho) = e_0(\rho) = 1$). The $H$ and $E$ generating series of $\rho$ are respectively
\begin{equation}
    H(\rho; t) := \sum_{k \geq 0} h_{k} (\rho) t^k = \exp \sum_{k \geq 1} \frac{p_k(\rho) t^k}{k}, \qquad E(\rho; t) := \sum_{k \geq 0} e_{k} (\rho) t^k = \exp \sum_{k \geq 1} \frac{(-1)^{k+1} p_k(\rho) t^k}{k}
\end{equation}
noting again $H(\rho; t) E(\rho; -t) = 1$. Further, if $\rho, \rho'$ are two specializations, we define
\begin{equation}
    \begin{split}
    H(\rho; \rho') &:= \exp \sum_{k \geq 1} \frac{p_k(\rho) p_k(\rho')}{k}, \\
    h_{sp}(\rho) &:= \exp \sum_{k \geq 1} \left( \frac{p_{2k}(\rho)}{2k} - \frac{p^2_{k}(\rho)}{2k} \right), \qquad h_{o}(\rho) := \exp \sum_{k \geq 1} \left( - \frac{p_{2k}(\rho)}{2k} - \frac{p^2_{k}(\rho)}{2k} \right).
    \end{split}
\end{equation} 

We remark that we think of $\rho$ in terms of the numbers $p_k(\rho), k \geq 1$ which completely determine it.

The meaning of the $s_{\lambda}(\rho), sp_{\lambda}(\rho), o_{\lambda}(\rho)$ now becomes transparent. One uses the Jacobi--Trudi identities~\eqref{eq:s_jt} and~\eqref{eq:sp_o_jt} with $\rho$ replacing $X, \tilde{X}, Y$. For $sp$ and $o$ one can also use~\eqref{eq:sp_o_s}. Finally if $\omega : \Lambda \to \Lambda$ is the involution $\omega(p_n) = (-1)^{n-1} p_n$ then one has:
\begin{equation}
    \omega (s_{\lambda}) = s_{\lambda'}, \qquad \omega (sp_{\lambda}) = o_{\lambda'}.
\end{equation}

The Cauchy identities~\eqref{eq:sp_o_cauchy} become:
\begin{equation}
    \label{eq:sp_o_cauchy_lifted}
    \begin{split}
    \sum_{\lambda} sp_{\lambda}(\rho^+) s_{\lambda}(\rho^-) &= h_{sp} (\rho^-) H(\rho^+; \rho^-) = \exp \sum_{k \geq 1} \left( \frac{p_k(\rho) p_k(\rho')}{k} + \frac{p_{2k}(\rho)}{2k} - \frac{p^2_{k}(\rho)}{2k} \right), \\
    \sum_{\lambda} o_{\lambda}(\rho^+) s_{\lambda}(\rho^-) &= h_{o} (\rho^-) H(\rho^+; \rho^-) = \exp \sum_{k \geq 1} \left( \frac{p_k(\rho) p_k(\rho')}{k} - \frac{p_{2k}(\rho)}{2k} - \frac{p^2_{k}(\rho)}{2k} \right). 
    \end{split}
\end{equation}

\subsection{Free fermions and Fock space}
\label{sec:fermions}

We recall some standard material on the theory of discrete free fermions, useful for the study of
Schur measures~\cite{oko} and processes~\cite{or}. We follow the exposition of Okounkov~\cite[Appendix A]{oko} and Baker~\cite{bak} for what is needed on symplectic and orthogonal characters and functions---see also \cite{djm} and \cite[Chapter14]{kac} for the Schur case.

The \emph{charged fermionic Fock space}, denoted $\F$, is the infinite
dimensional Hilbert space spanned by the orthonormal basis $\ket{S}$, where $S
 \subset \Z'$ with $|S \cap \Z'_+|<\infty$ and $|\Z'_-  - S| < \infty$. We call the elements of $S$ \textit{particles}. We also use the bra--ket notation throughout, denoting dual vectors by
$\bra{\cdot}$. A basis vector $\ket{S}$ could be seen as the semi-infinite wedge product
$\ket{S} = \underline{s_1} \wedge \underline{s_2} \wedge \underline{s_3} \wedge \cdots$
where $s_1 > s_2 > s_3 > \cdots$ are the \textit{particle positions}. States $S$ are in bijection with pairs $(\lambda, c)$, where $\lambda \in \Par$ is a partition and $c \in \Z$, via the following \textit{combinatorial boson--fermion} correspondence:
\begin{equation}
 (\lambda, c) \longleftrightarrow S(\lambda,c):=\{ \lambda_i - i + 1/2 + c ,\ i \geq 1\}.
\end{equation}
Using this observation, we see that $\F = \oplus_{c \in \Z} \F_c$ where $\F_c$ is spanned by partitions $\lambda$ of charge $c$. We denote by $\ket{\lambda, c}$ the vector corresponding to the pair $(\lambda, c)$, and by $\ket{\lambda}$ the vector in $\ket{\lambda, 0} \in \F_0$. Throughout this note we will only be interested in the charge 0 subspace $\F_0$. The vector $\vv$, corresponding to the empty partition (of charge 0), is called the \emph{vacuum} vector (sometimes, the \textit{Fermi sea}). 

For $k \in \Z'$, define the free fermionic (Clifford) operators 
\begin{equation}
    \psi_k \ket{S} := \underline{k} \wedge \ket{S}, \qquad 
    \psi^*_k \ket{S} := 
        \begin{cases} 
            (-1)^{i-1} s_1 \wedge \dots s_{i-1} \wedge s_{i+1} \dots, &\text{if } k=s_i \text{ for some }i,\\
            0, &\text{otherwise}.    
        \end{cases}
\end{equation}
Informally, $\psi_k$ tries to add a particle to the configuration $S$, while $\psi^*_k$ tries to remove one (with an appropriate sign in both cases), and both kill the vector if a particle is there at site $k$ (for $\psi_k$) or is absent (for $\psi^*$). Furthermore, $\psi_k$ and $\psi_k^*$ are adjoint under the inner product $\langle \lambda, c | \mu, d \rangle = \delta_{\lambda, \mu} \delta_{c, d}$ and satisfy the canonical anti-commutation relations
\begin{equation}
  \label{eq:car}
  \{ \psi_k, \psi_\ell^* \} = \delta_{k,\ell}, \qquad
  \{ \psi_k, \psi_\ell \} = \{ \psi_k^*, \psi_\ell^* \} = 0, \qquad
  k,\ell \in \Z'.
\end{equation}

The generating series of these operators, the so-called \textit{fermionic fields}, are:
\begin{equation}
 \label{eq:psigendef}
 \psi(z) := \sum_{k \in \Z'} \psi_k z^k, \qquad
 \psi^*(w) := \sum_{k \in \Z'} \psi^*_k w^{-k}.
\end{equation}
Observe that $\psi_k \vv = \psi_{-k}^* \vv = 0$ for $k<0$. 

The \textit{charge operator}, defined as $C := \sum_{k \in \Z_+} \psi_k \psi^*_k - \sum_{k \in \Z_-} \psi^*_k \psi_k$ acts diagonally on states $C \ket{\lambda, c} = c \ket{\lambda, c}$ and has $\F_0$ as kernel. The \textit{shift operator} $R$ takes $\F_{c} \to \F_{c+1}$ via $R  \ket{\lambda, c} =  \ket{\lambda, c+1}$.

We construct the 
\emph{bosonic operators} $\alpha_n$ as follows:
\begin{equation}
  \alpha_n := \sum_{k \in \Z'} \psi_{k-n} \psi_k^*, \qquad
  n = \pm 1, \pm 2, \ldots
\end{equation}
and for $n=0$, $\alpha_0 = C$. We have
$\alpha_n^*=\alpha_{-n}$, and $\alpha_n \vv=0, n>0$. The bosons satisfy the Heisenberg commutations 
\begin{equation} \label{eq:alphacommut}
  [\alpha_n,\alpha_m] = n \delta_{n,-m}
\end{equation}
and commute thusly with the fermionic fields:
\begin{equation}
    \label{eq:alphapsicommut}
[\alpha_n,\psi(z)] = z^n \psi(z), \qquad
[\alpha_n,\psi^*(w)] = - w^n \psi^*(w).
\end{equation}

Define the \emph{half-vertex operators} $\Gamma_\pm(A)$ by
\begin{equation}
  \Gamma_\pm(A) := \exp \left( \sum_{n \geq 1} \frac{p_n(A) \alpha_{\pm n}}{n} \right)
\end{equation}
where we recall $p_n(A)$ is the $n$-th Newton powersum in $A = (a_1, \dots)$. $\Gamma_-(A)$ is the adjoint of $\Gamma_+(A)$. If $A$  consists of a single variable $x$, we write $\Gamma_{\pm}(x)$. Observe that $\Gamma_{\pm}(A) = \prod_{a \in A} \Gamma_{\pm}(a)$, a consequence of 
\begin{equation}
    \label{eq:gambranch}
    \Gamma_+(A) \Gamma_+(B) = \Gamma_+(B) \Gamma_+(A).
  \end{equation}
The half-vertex operators act trivially on the vacuum and co-vacuum as follows:
\begin{equation}
  \label{eq:gamcancel}
  \Gamma_+(A) \vv = \vv, \qquad \vcv \Gamma_-(A) = \vcv.
\end{equation}
Finally, implied by~\eqref{eq:alphacommut}, by~\eqref{eq:alphapsicommut} and by the Baker--Campbell--Hausdorff formula, we have the following quasi-commutation relations:
\begin{equation}
  \label{eq:gamcomm}
  \Gamma_+(A) \Gamma_-(B) = H(A;B) \Gamma_-(B) \Gamma_+(A)
\end{equation}
and 
\begin{equation}
  \label{eq:gampsi}
  \Gamma_\pm(A) \psi(z) = H(A;z^{\pm 1}) \psi(z) \Gamma_\pm(A), \qquad
  \Gamma_\pm(A) \psi^*(w) = H(A;w^{\pm 1})^{-1} \psi^*(w) \Gamma_\pm(A).
\end{equation}

\begin{rem}
    \label{rem:Hfin}
These relations always make sense at a formal level; analytically~\eqref{eq:gamcomm} requires $|a_i b_j| < 1$; for~\eqref{eq:gampsi} we need factors of the form $1/(1-a \zeta)$ have $|a \zeta| < 1$ to be expandable in power series of $\zeta$ where $\zeta \in \{z^{\pm 1}, w^{\pm 1}\}.$
\end{rem}

The half-vertex $\Gamma$ operators commute with the charge ($C$) and shift ($R$)
 operators. They have skew Schur functions as matrix elements:
\begin{equation}
  \label{eq:schurelem}
  \bra{\lambda,c} \Gamma_+(A) \ket{\mu,d} =
  \bra{\mu,d} \Gamma_-(A) \ket{\lambda,c} =
    s_{\mu/\lambda}(A) \delta_{c,d}
\end{equation}

The fermionic operators
can be reconstructed from the half-vertex operators and the charge and
shift operators $C$ and $R$ (see e.g.\ \cite[Theorem~14.10]{kac}) via the so-called
\textit{boson--fermion correspondence}:
  \begin{equation} \label{eq:boson_fermion}
    \begin{split}  
      \psi(z) = z^{C-\frac{1}{2}} R\, \Gami(z) \Gapl^{-1} \left(z^{-1} \right),\qquad \psi^*(w) = R^{-1} w^{-C+\frac{1}{2}} \Gami^{-1}(w) \Gapl\left( w^{-1} \right).
    \end{split}
  \end{equation}

In what follows, $X=(x_1, x_1^{-1}, \dots, x_N, x_N^{-1}), Y=(y_1, \dots, y_N)$ are alphabets, taken without loss of generality to have the same number of letters. Recall that $\tilde{X}$ stands for one of $X$ or $(X, 1)$. Following Baker~\cite{bak}, let 
\begin{equation}
    \begin{split}
    \Gspm(X) &= \exp \sum_{n \geq 1} \frac{1}{n} \left( p_n(X) \alpha_{\pm n} + \frac{\alpha_{\pm 2n}}{2} - \frac{\alpha_{\pm n}^2}{2} \right), \\
    \Gopm(\tilde{X}) &= \exp \sum_{n \geq 1} \frac{1}{n} \left( p_n(\tilde{X}) \alpha_{\pm n} - \frac{\alpha_{\pm 2n}}{2} - \frac{\alpha_{\pm n}^2}{2} \right).
    \end{split}
\end{equation}
$\Gsp / \Gop$ acts trivially on the vacuum while $\Gsm / \Gom$ on the dual vacuum: e.g., $\Gsp(X) \vv = \vv$ and $\vcv \Gom(\tilde{X}) = \vcv$. They satisfy more complicated quasi-commutation relations with $\Gamma_{\pm}, \psi, \psi^*$, which we now provide along with a short proof for the benefit of the reader:
\begin{prop}
    \label{prop:sp_o_comm}
    We have the following quasi-commutation relations:
\begin{equation}
    \begin{split}
    \Gsp(X) \Gapl(Y) &= \Gapl(Y) \Gsp(X),\\
    \Gsp(X) \Gami(Y) &= H(X; Y) h_{sp}(Y) \Gami(Y) \Gapl^{-1} (Y) \Gsp(X),\\
    \Gsp(X) \psi(z) \Gsp^{-1}(X) &= H(X; z) \psi(z) \Gapl^{-1}(z),\\
    \Gsp(X) \psi^*(w) \Gsp^{-1}(X) &= (1-w^2) H(X; w)^{-1} \psi^*(w) \Gapl(w), \\
    \Gop(\tilde{X}) \Gapl(Y) &= \Gapl(Y) \Gop(\tilde{X}),\\
    \Gop(\tilde{X}) \Gami(Y) &= H(\tilde{X}; Y) h_{o}(Y) \Gami(Y) \Gapl^{-1} (Y) \Gop(\tilde{X}),\\
    \Gop(\tilde{X}) \psi(z) \Gop^{-1}(\tilde{X}) &= (1-z^2) H(\tilde{X}; z) \psi(z) \Gapl^{-1}(z),\\
    \Gop(\tilde{X}) \psi^*(w) \Gop^{-1}(\tilde{X}) &= H(\tilde{X}; w)^{-1} \psi^*(w) \Gapl(w).
    % \Gsm(X) \Gami(Y) &= \Gami(Y) \Gsm(X), \\
    % \Gapl(Y) \Gsm(X) &= H(X; Y) h_{sp}(Y) \Gsm(X) \Gami^{-1} (Y) \Gapl(Y), \\
    % \Gsm(X) \psi(z) \Gsm^{-1}(X) &= H(X; z^{-1}) \psi(z) \Gami^{-1}(z^{-1}), \\
    % \Gsm(X) \psi^*(z) \Gsm^{-1} (X) &= (1-w^{-2}) H(X; w^{-1})^{-1} \psi^*(w) \Gami(w^{-1}).
    \end{split}
\end{equation}
\end{prop}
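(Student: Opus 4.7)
The plan is to decompose
\[
\Gsp(X) = \Gapl(X)\,\Phi_{sp}, \qquad \Gop(\tilde X) = \Gapl(\tilde X)\,\Phi_o,
\]
where $\Phi_{sp} := \exp \sum_{n \geq 1} \tfrac{1}{n}\bigl(\tfrac{\alpha_{2n}}{2} - \tfrac{\alpha_n^2}{2}\bigr)$ and $\Phi_o := \exp \sum_{n \geq 1} \tfrac{1}{n}\bigl(-\tfrac{\alpha_{2n}}{2} - \tfrac{\alpha_n^2}{2}\bigr)$. Because all positive bosons $\alpha_n$ ($n>0$) commute amongst themselves, this factorization is unambiguous and $\Phi_{sp/o}$ commutes with every $\Gapl(Z)$. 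In particular, the two relations $\Gsp(X)\Gapl(Y) = \Gapl(Y)\Gsp(X)$ and its orthogonal analogue drop out immediately, and the remaining task reduces to conjugating $\Gami(Y)$, $\psi(z)$, $\psi^*(w)$ by $\Phi_{sp/o}$.

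For the $\Gami(Y)$ conjugation, set $D = \log \Phi_{sp}$ and $B = \log \Gami(Y) = \sum_{m \geq 1} \tfrac{p_m(Y)}{m}\alpha_{-m}$, and expand $e^D e^B = e^B \exp(e^{-\mathrm{ad}_B}\,D)$ via Hadamard. The Heisenberg relations~\eqref{eq:alphacommut}, together with $[\alpha_n^2, \alpha_{-m}] = 2n\,\delta_{n,m}\,\alpha_n$, yield
\[
[D, B] \;=\; \sum_n \frac{p_{2n}(Y)}{2n} \;-\; \sum_n \frac{p_n(Y)}{n}\,\alpha_n, \qquad [[D,B],B] \;=\; -\sum_n \frac{p_n(Y)^2}{n},
\]
with all higher-order commutators vanishing since $[[D,B],B]$ is already a scalar. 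Re-exponentiating (permissible because every summand involves only positive bosons and hence commutes) and identifying $\sum_n\bigl(\tfrac{p_{2n}(Y)}{2n} - \tfrac{p_n(Y)^2}{2n}\bigr) = \log h_{sp}(Y)$, I obtain
\[
\Phi_{sp}\Gami(Y) \;=\; h_{sp}(Y)\,\Gami(Y)\,\Phi_{sp}\,\Gapl^{-1}(Y),
\]
and combining with $\Gapl(X)\Gami(Y) = H(X;Y)\Gami(Y)\Gapl(X)$ from~\eqref{eq:gamcomm} produces the stated symplectic quasi-commutation. The orthogonal case is verbatim after flipping the sign of $\alpha_{2n}/2$, which replaces $h_{sp}(Y)$ by $h_o(Y)$.

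The fermionic relations then follow from the boson--fermion correspondence~\eqref{eq:boson_fermion}. Since $\Phi_{sp/o}$ commutes with $C$, $R$, and every $\Gapl(\cdot)$, conjugating $\psi(z) = z^{C - 1/2} R\,\Gami(z)\Gapl^{-1}(z^{-1})$ by $\Phi_{sp/o}$ reduces entirely to the $\Gami$ relation just proved, specialized to the single-letter alphabet $Y = (z)$; here $h_{sp}(z) = 1$ (empty product $\prod_{i<j}$) and $h_o(z) = 1 - z^2$, which accounts for the $(1-z^2)$ factor in the orthogonal $\psi$-formula. The $\psi^*$ computations are analogous, with the additional input $\Gapl(w)\Gami^{-1}(w) = (1-w^2)\,\Gami^{-1}(w)\Gapl(w)$ --- itself an immediate consequence of~\eqref{eq:gamcomm} --- responsible for the $(1-w^2)$ factor in the symplectic $\psi^*$-formula. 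Applying $\Gapl(X)$ (resp.\ $\Gapl(\tilde X)$) and invoking~\eqref{eq:gampsi} then produces the quoted $H(X;z)$ and $H(\tilde X;w)^{-1}$ prefactors.

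The only non-trivial step is the BCH computation in the second paragraph: one must verify that the linear term produced by $[\alpha_n^2, \alpha_{-m}]$, when bracketed against $B$ a second time, evaluates to exactly $-\tfrac{1}{2}\sum_n p_n(Y)^2/n$, so that together with the $\alpha_{2n}/2$ contribution one recovers $\log h_{sp}(Y)$ on the nose --- the interplay between the shift-by-two term and the quadratic term is precisely what manufactures the $h_{sp/o}$ factor. Everything else is mechanical application of~\eqref{eq:gamcomm},~\eqref{eq:gampsi}, and~\eqref{eq:boson_fermion}.
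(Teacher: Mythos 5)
Your proof is correct, and it takes a genuinely different route from the paper's. You decompose $\Gsp(X) = \Gapl(X)\Phi_{sp}$ (resp. $\Gop = \Gapl\Phi_o$) and prove the \emph{bosonic} quasi-commutation $\Phi_{sp}\Gami(Y) = h_{sp}(Y)\,\Gami(Y)\,\Phi_{sp}\,\Gapl^{-1}(Y)$ directly, via a short Hadamard/BCH computation in the Heisenberg algebra that terminates after three terms because $[[D,B],B]$ is a scalar; the fermionic relations then drop out of the boson--fermion correspondence, with the observations that the single-variable evaluations are $h_{sp}(z)=1$, $h_o(z)=1-z^2$, and that $\Gapl(w)\Gami^{-1}(w)=(1-w^2)\Gami^{-1}(w)\Gapl(w)$ supplies the remaining $(1-w^2)$ factor. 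The paper goes in the opposite direction: it conjugates the fermion field $\psi(z)$ first, splitting the exponent of $\Gsp$ into three pieces $H_1,H_2,H_3$, computing $[\alpha_n^2,\psi(z)]$ and its iterated commutators inductively, and then deduces the $\Gami(Y)$ relation from the $\psi$ relation one variable of $Y$ at a time, assembling $h_{sp}(Y)$ as a product of cross-terms. Your approach is tidier in that it produces the $h_{sp/o}(Y)$ factor in one shot for arbitrary $Y$, and the BCH computation lives entirely inside the Heisenberg algebra (no fields), at the cost of then having to invoke boson--fermion three times. Both computations are essentially dual to one another; either suffices.

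One small remark: the paper's displayed definition $H_3 := \sum_{n\geq 1}(2n)^{-1}\alpha_n^2$ is missing a sign (it should be the negative of that to match the definition of $\Gsp$), but the subsequent calculations there use the correct sign. Your $\Phi_{sp}$ has the signs right, so this does not affect the comparison.
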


\begin{proof}
    We will prove the first four equations. The last four follow along similar arguments. We also remark that the relations make sense formally; analytically they require similar conditions as in Remark~\ref{rem:Hfin}.
    
    The first commutation relation is immediate, as both $\Gapl$ and $\Gsp$ only involve exponentials of bosonic operators $\alpha_n$ with $n > 0$ and all such operators commute between themselves. The second could be proved from the third as follows: first, via the boson--fermion correspondence~\eqref{eq:boson_fermion} one has $\psi(z) = z^{C-1/2} R \Gami(z) \Gapl^{-1}(1/z)$; second, observing that $R$ and $C$ commute with everything else below and that again $\Gapl$ and $\Gsp$ also commute, the third equation implies
    \begin{equation}
        \Gsp(X) \Gami(z) \Gsp^{-1}(X) = H(X; z) \Gami(z) \Gapl^{-1} (z)
    \end{equation}
    and so
    \begin{equation}
        \begin{split}
        \Gsp(X) \Gami(Y) \Gsp^{-1}(X) &= \prod_{i=1}^N \Gsp(X) \Gami(y_i) \Gsp^{-1}(X)\\
        &= \prod_{i=1}^N H(X; y_i) \Gami(y_i) \Gapl^{-1}(y_i) \\
        &= H(X; Y) h_{sp}(Y) \Gami(Y) \Gapl^{-1}(Y)
        \end{split}
    \end{equation}
    where for the last equality we have commuted all $\Gapl^{-1}(y_i)$ to the right and we have collected the terms at the end.
    We now prove the third equation. We can write $\Gsp(X) = \exp(H_1 + H_2 + H_3)$ where $H_1:= \sum_{n \geq 1} p_n(X) n^{-1} \alpha_n$, $H_2:= \sum_{n \geq 1} (2n)^{-1} \alpha_{2n}$, $H_3:= \sum_{n \geq 1} (2n)^{-1} \alpha_n^2$. The $H$'s commute among themselves as they only involve positive index bosons. Then equation~\eqref{eq:alphapsicommut} implies, via Baker--Campbell--Hausdorff,
    \begin{equation}
        e^{H_1} \psi(z) e^{-H_1} = H(X; z) \psi(z), \qquad e^{H_2} \psi(z) e^{-H_2} = \frac{1}{\sqrt{1-z^2}}  \psi(z).
    \end{equation}
    Finally, for $H_3$, notice that $[\alpha_n^2, \psi(z)] = z^n \alpha_n \psi(z) + z^n \psi(z) \alpha_n = z^{2n} \psi(z) + 2 z^n \psi(z) \alpha_n$ where for the first equation we have used $[ab, c] = a[b, c] + [a, c]b$, while for the second we used~\eqref{eq:alphapsicommut} to write the $\alpha \psi$ term in reverse order. Therefore 
    \begin{equation}
        [H_3, \psi(z)] = \psi(z) (a(z) - H_+(z))
    \end{equation}
    where $a(z) := \log \sqrt{1-z^2}$, $H_+(z) := \sum_{n \geq 1} n^{-1} z^n \alpha_n = \log \Gapl(z)$. Inductively then we obtain
    \begin{equation}
        [H_3, [H_3, \dots [H_3, \psi(z)] \dots ]] = \psi(z) (a(z) - H_+(z))^n
    \end{equation}
    where above there are $n$ open commutator brackets. Then, again using Baker--Campbell--Hausdorff, we obtain
    \begin{equation}
        e^{H_3} \psi(z) e^{-H_3} = \psi(z) + [H_3, \psi(z)]+\frac{1}{2!} [H_3, [H_3, \psi(z)]] + \dots = \psi(z) e^{a(z) - H_+(z)} = \sqrt{1-z^2} \psi(z) \Gapl^{-1} (z).
    \end{equation}
    Putting it all together yields the desired result.
    The fourth equation follows along similar computations except we mention that the factor $\sqrt{1-w^2}$ survives, squared, owing to the commutation $[\alpha_n, \psi^*(w)] = -w^n \psi^*(w)$. Also the $\Gapl$ operator appearing in the end is not inverted, as $[H_3, \psi^*(w)] = \psi(w) (a(w) + H_+(w))$. We leave the details to the reader. 
\end{proof}

\begin{rem}
    Similar quasi-commutations between $\Gsm / \Gom$ and $\Gapl, \psi, \psi^*$ could also be written down. 
\end{rem}

The important fact about these \textit{twisted} half-vertex operators is that they have symplectic and orthogonal characters as matrix elements~\cite{bak}: 
\begin{equation}
    \label{eq:spoelem}
    sp_{\lambda}(X) = \vcv \Gsp(X) \ket{\lambda} = \bra{\lambda} \Gsm(X) \vv, \qquad o_{\lambda}(\tilde{X}) = \vcv \Gop(\tilde{X}) \ket{\lambda} = \bra{\lambda} \Gom(\tilde{X}) \vv.
\end{equation}

\section{Correlation functions}
\label{sec:corr}

\subsection{The symplectic and orthogonal character case}
\label{sec:corr_char}

Assume $X = (x_1, x_1^{-1}, \dots, x_N, x_N^{-1})$ and $Y=(y_1, \dots, y_N)$ satisfy the following inequalities:
\begin{equation}
    \label{eq:XY_inequality}
    1 \geq x_1 \geq x_2 \dots \geq x_N > y_1 \geq y_2 \geq \dots \geq y_N \geq 0.
\end{equation}
and recall that $\tilde{X}$ stands for either $X$ or $(X, 1)$. 

Consider the following two measures on partitions (the dependence on parameters $X, Y$ assumed implicit):
\begin{equation}
    m_{sp}(\lambda) = \frac{sp_{\lambda}(X) s_{\lambda}(Y)}{H(X;Y) h_{sp} (Y)}, \qquad m_{o}(\lambda) = \frac{o_{\lambda}(\tilde{X}) s_{\lambda}(Y)}{H(\tilde{X};Y) h_{o} (Y)}. 
\end{equation}

These are true probability measures in view of the Cauchy identities~\eqref{eq:sp_o_cauchy}, of the fact both $s_{\lambda}$ (see~\cite{mac}) and $sp_{\lambda}, o_{\lambda}$ (see~\cite{kt, sun2}) can be written as generating series of semi-standard tableaux with positive coefficients, and of the inequalities~\eqref{eq:XY_inequality}. 

Let $k_1, \dots, k_n \subset \Z'$ be particle positions. We are interested in the probability, under $m_{sp/o}$ that the particle configuration of $\lambda$ has particles at these positions. These are the so-called \textit{$n$-point correlation functions}:
\begin{equation}
    \varrho_{sp/o}(k_1, \dots, k_n) := m_{sp/o}(\{k_1, \dots, k_n \} \subset \{ \lambda_i - i + 1/2 \}).
\end{equation}

\begin{thm}
    \label{thm:corr_char}
    The measures $m_{sp}$ and $m_o$ are determinantal:
    \begin{equation}
        \varrho_{sp/o}(k_1, \dots, k_n) = \det [K_{sp/o}(k_i, k_j)]_{1 \leq i, j \leq n}
    \end{equation}
    with kernels
    \begin{equation}
        \begin{split}
        K_{sp} (a, b) &= \int_{z} \int_{w} \frac{F_{sp}(z)}{F_{sp}(w)} \frac{(1-w^2)}{(1-wz) (1-wz^{-1})} \frac{\dx z \dx w}{(2 \pi \im)^2 z^{a+3/2} w^{-b+1/2}}, \\
        K_{o} (a, b) &= \int_{z} \int_{w} \frac{F_o(z)}{F_o(w)} \frac{(1-z^2)}{(1-wz) (1-wz^{-1})} \frac{\dx z \dx w}{(2 \pi \im)^2 z^{a+3/2} w^{-b+1/2}}
        \end{split}
    \end{equation}
    where
    \begin{equation}
        F_{sp}(z) = \frac{H(X; z)}{H(Y; z) H(Y; z^{-1})}, \qquad F_{o}(z) = \frac{H(\tilde{X}; z)}{H(Y; z) H(Y; z^{-1})}
    \end{equation}
where the $z$ and $w$ contour are simple counterclockwise circles around 0 satisfying $y_1 < |w| < |z| < x_N$.
\end{thm}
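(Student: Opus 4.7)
The plan is to adapt Okounkov's Fock space derivation for the Schur measure, now dressed by the twisted half-vertex operators $\Gsp(X)$ and $\Gop(\tilde X)$. Using the matrix element identities~\eqref{eq:spoelem} and~\eqref{eq:schurelem} together with the fact that $\psi_k \psi^*_k$ acts diagonally on partition states as the occupancy indicator at site $k$, one first rewrites
\begin{equation*}
\varrho_{sp}(k_1, \dots, k_n) = \frac{1}{H(X; Y) h_{sp}(Y)} \vcv \Gsp(X) \prod_{i=1}^n \psi_{k_i} \psi^*_{k_i} \Gami(Y) \vv,
\end{equation*}
and analogously for $\varrho_o$ with $\Gsp, h_{sp}$ replaced by $\Gop, h_o$. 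This is the correct Fock-space encoding because summing the identity $sp_\lambda(X) s_\lambda(Y) = \vcv \Gsp(X) \ket{\lambda} \bra{\lambda} \Gami(Y) \vv$ against $\prod_i \mathbf{1}[k_i \in S(\lambda, 0)]$ reassembles the inner matrix element into an occupancy-weighted average.

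I would then realize each $\psi_{k_i}, \psi^*_{k_i}$ as a contour integral of the fermionic fields $\psi(z_i), \psi^*(w_i)$ over small counterclockwise circles about $0$, and transport $\Gsp(X)$ rightwards past the entire string of fermionic insertions using Proposition~\ref{prop:sp_o_comm}. Each commutation of $\Gsp(X)$ through $\psi(z_i)$ yields a scalar $H(X; z_i)$ and deposits an ordinary $\Gapl^{-1}(z_i)$; each commutation through $\psi^*(w_i)$ yields $(1-w_i^2) H(X; w_i)^{-1}$ and deposits $\Gapl(w_i)$. Applying the relation $\Gsp(X) \Gami(Y) = H(X;Y) h_{sp}(Y) \Gami(Y) \Gapl^{-1}(Y) \Gsp(X)$ together with $\Gsp(X) \vv = \vv$ and $\Gapl^{\pm}(\cdot) \vv = \vv$ absorbs the outer normalization $H(X;Y) h_{sp}(Y)$ against the denominator. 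The deposited $\Gapl^{\pm}$'s are finally commuted leftward across the $\psi, \psi^*$ using~\eqref{eq:gamcomm} and~\eqref{eq:gampsi}, producing further scalar factors before being absorbed by $\vcv$.

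After all vertex-operator manipulations the total prefactor splits as a single-pair product $\prod_i \tfrac{F_{sp}(z_i)}{F_{sp}(w_i)} \tfrac{1-w_i^2}{1-z_i w_i}$ times a cross-pair product $\prod_{i<j} \tfrac{(1-z_i z_j)(1-w_i w_j)}{(1-w_i z_j)(1-z_i w_j)}$, while the bare fermionic average $\vcv \prod_i \psi(z_i) \psi^*(w_i) \vv$ is evaluated by free-fermion Wick's theorem as a Cauchy determinant $\det\bigl[(w_j/z_i)^{1/2}/(1-w_j/z_i)\bigr]_{i,j}$. The crux is a symplectic Cauchy-type determinantal identity
\begin{equation*}
\prod_i \frac{1}{1 - z_i w_i} \cdot \prod_{i < j} \frac{(1-z_i z_j)(1-w_i w_j)}{(1-w_i z_j)(1-z_i w_j)} \cdot \det\left[\frac{1}{z_i - w_j}\right]_{i,j} = \det\left[\frac{1}{(z_i - w_j)(1-z_i w_j)}\right]_{i,j},
\end{equation*}
which absorbs all cross terms into the determinant; interchanging the sum over Wick pairings with the contour integrals then yields $\det[K_{sp}(k_i, k_j)]$ with the stated kernel. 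The orthogonal case is entirely parallel, the only difference being that Proposition~\ref{prop:sp_o_comm} now places the factor $(1-z^2)$ on $\psi(z)$ rather than $(1-w^2)$ on $\psi^*(w)$, producing $K_o$.

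The main obstacle is the Cauchy-type identity above: unlike the vanilla Schur case, where twist-free $\Gapl$ contributes only diagonal scalars and the standard Cauchy determinant suffices unadorned, here one must verify that the cross-pair factors deposited by $\Gsp$ precisely reorganize into a determinant of the modified kernel $1/[(z_i-w_j)(1-z_iw_j)]$. This can be proved either by induction on $n$ or by interpreting both sides as reproducing kernels in a BC-symmetric Szeg\H{o}-type space; an analogous identity dispatches the orthogonal case. Analytic validity of all contour manipulations is secured by the prescription $y_1 < |w| < |z| < x_N$ and the inequalities~\eqref{eq:XY_inequality}, consistent with Remark~\ref{rem:Hfin}.
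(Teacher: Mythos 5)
Your proposal is correct and follows essentially the same Fock-space route as the paper: express the correlator as a vacuum matrix element with $\Gsp$ (or $\Gop$) and $\Gami$ sandwiching the fermionic projectors, conjugate the twisted vertex operator through the fields via Proposition~\ref{prop:sp_o_comm}, evaluate the free-fermion Wick average as a Cauchy determinant, and then reassemble the deposited cross-factors into a determinant of the kernel via the $BC$-type Cauchy identity. The determinantal identity you call the crux is the same one the paper invokes as its equation~\eqref{eq:bc_cauchy_det}, merely written in an equivalent form ($1/[(z_i-w_j)(1-z_iw_j)]$ instead of $1/[(1-w_iz_j)(1-w_iz_j^{-1})]$, the two differing only by row/column factors of $z_j$ and transposition).
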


\begin{proof}
    We will prove the $sp$ statement, the other being analogous. Throughout we use $F$ for $F_{sp}$ and $Z$ for the partition function $H(X; Y) h_{sp}(Y)$. First notice, using $\langle A \rangle := \vcv A \vv$ throughout, that
    \begin{equation}
        \begin{split}
        \varrho_{sp}(k_1, \dots, k_n) &= \frac{1}{Z} \left\langle \Gsp(X) \left( \prod_{i=1}^n  \psi_{k_i} \psi^*_{k_i} \right) \Gami(Y) \right\rangle \\
            &= \frac{1}{Z} \left[ \frac{z_1^{k_1} \dots z_n^{k_n}}{w_1^{k_1} \dots w_n^{k_n}} \right] \left\langle \Gsp(X) \left( \prod_{i=1}^n  \psi(z_i) \psi^* (w_i) \right) \Gami(Y) \right\rangle\\
        \end{split}
    \end{equation}
since the projector $\psi_k \psi_k^*$ picks out only the configurations satisfying $k \in \{ \lambda_i - i + 1/2 \}$. Here $[z^k] f$ denotes the coefficient of $z^k$ in $f$.  

Let us start with the one point correlator:
\begin{equation}
    \begin{split}
\frac{1}{Z} \langle \Gsp(X) \psi(z) \psi^*(w) \Gami(Y) \rangle &= \frac{1}{Z} \langle \Gsp(X) \psi(z) \Gsp^{-1}(X) \Gsp(X) \psi^*(w) \Gsp^{-1} (X) \Gsp(X) \Gami(Y) \rangle \\
&= (1-w^2) \frac{H(X; z)}{H(X; w)} \langle \psi(z) \Gapl^{-1}(z) \psi^*(w) \Gapl(w) \Gami(Y) \rangle \\
&= (1-w^2) \frac{H(X; z) H(Y; w) H(Y; w^{-1})} {H(X; w) H(Y; z) H(Y; z^{-1})} \langle \psi(z) \Gapl^{-1} (z) \psi^*(w) \rangle \\
&= \frac{F(z)}{F(w)} (1-w^2) \sqrt{\frac{w}{z}} \frac{1}{(1-wz)(1-wz^{-1})}
    \end{split}
\end{equation}
with
\begin{equation}
    \label{eq:defF}
F(z) = \frac{H(X; z)}{H(Y; z) H(Y; z^{-1})} = \prod_{i=1}^N \frac{(1-y_i z) (1-y_i z^{-1})}{(1-x_i z)(1-x^{-1}_i z)}
\end{equation}
as desired and where in the last equality we have used the boson--fermion correspondence to evaluate $\langle \psi(z) \Gapl^{-1}(z) \psi^*(w) \rangle$. 

We can now perform a similar computation for the $n$-point correlator. Using the notation $\Ad_G(A) = GAG^{-1}$ we have:
\begin{align}
    & \frac{1}{Z} \left\langle \Gsp(X) \left( \prod_{i} \psi(z_i) \psi^*(w_i)  \right) \Gami(Y) \right\rangle = \notag \\
    &= \frac{1}{Z}\left\langle \left( \prod_{i}  \Ad_{\Gsp(X)}(\psi(z_i)) \Ad_{\Gsp(X)}(\psi^*(w_i)) \right) \Gsp(X) \Gami(Y) \right\rangle \notag \\
    &= \prod_{i} (1-w_i^2) \frac{H(X; z_i)} {H(X; w_i)} \left\langle \left( \prod_{i} \psi(z_i) \Gapl^{-1}(z_i) \psi^*(w_i) \Gapl(w_i) \right) \Gami(Y) \right\rangle \notag \\
    &= \prod_{i} (1-w_i^2) \frac{H(X; z_i) H(Y; w_i)} {H(X; w_i) H(Y; z_i)} \prod_{i < j} \frac{H(w_i; z_j)}{H(z_i; z_j) H(w_i; w_j)} \prod_{i \leq j} H(z_i; w_j) \left\langle \left( \prod_{i} \psi(z_i) \psi^*(w_i) \right) \Gami(Y) \right\rangle \notag \\
    &= \prod_{i} (1-w_i^2) \frac{H(X; z_i) H(Y; w_i) H(Y; w_i^{-1})} {H(X; w_i) H(Y; z_i)H(Y; z_i^{-1})} \prod_{i < j} (1-z_i z_j) (1-w_i w_j) \prod_{i, j}\frac{1}{1-z_i w_j} \left\langle \prod_{i} \psi(z_i) \psi^*(w_i)  \right\rangle \\
    &= \prod_{i=1}^n (1-w_i^2) \frac{H(X; z_i) H(Y; w_i) H(Y; w_i^{-1})} {H(X; w_i) H(Y; z_i)H(Y; z_i^{-1})} \prod_{i < j} (1-z_i z_j) (1-w_i w_j) \prod_{i, j} \frac{1}{1-z_i w_j} \notag \\
    & \quad \times \sqrt{\prod_{i} \frac{w_i}{z_i}} \prod_{i<j} (z_i^{-1} - z_j^{-1} ) (w_i - w_j) \prod_{i,j} \frac{1}{1-w_i z_j^{-1}}  \notag \\
    &= \prod_{i} \frac{F(z_i)}{F(w_i)} (1-w_i^2) \sqrt{\frac{w_i}{z_i}} \prod_{i < j} (z_i - z_j) (w_i - w_j) (1-w_i w_j) \left( 1 - \frac{1}{z_i z_j} \right) \prod_{i, j} \frac{1}{(1-w_i z_j)(1-w_i z_j^{-1})} \notag \\
    &= \det \left[ (1-w_i^2) \sqrt{\frac{w_i}{z_j}} \frac{F(z_j)}{F(w_i)} \frac{1}{(1-w_i z_j) (1-w_i z_j^{-1})} \right]_{1 \leq i, j \leq n} \notag
\end{align}
where:
\begin{itemize}
    \item the first equality is tautologically true due to cancellations from conjugation;
    \item the second equality is true due to the action of $\Gsp$ on $\psi(z), \psi^*(w)$ using Proposition~\ref{prop:sp_o_comm};
    \item the third equality follows upon commuting all $\Gamma_+^{\pm 1}$'s to the right where they are absorbed by the vacuum;
    \item the fourth equality follows upon commuting $\Gamma_-(Y)$ to the left where it is absorbed by the vacuum;
    \item the fifth equality follows from computing $\left\langle \prod_{i} \psi(z_i) \psi^*(w_i)  \right\rangle$ using e.g., the boson--fermion correspondence~\eqref{eq:boson_fermion} and the quasi-commutation between $\Gapl$ and $\Gami$;
    \item the sixth equality uses the definition of $F$ from~\eqref{eq:defF};
    \item and the last equality follows from using the following $BC$-type Cauchy determinant:
    \begin{equation}
        \label{eq:bc_cauchy_det}
        \det_{i,j} \frac{1}{(1-w_i z_j)(1-w_i z_j^{-1})}  = \prod_{i<j} (z_i - z_j) (w_i - w_j) (1-w_i w_j) \left( 1-\frac{1}{z_i z_j} \right) \prod_{i,j} \frac{1} {(1-w_i z_j) (w_i z_j^{-1})}.
    \end{equation}
\end{itemize}

Passing to the correlation functions and using contour integrals to perform the coefficient extraction, we obtain
\begin{equation}
    \begin{split}
    \varrho_{sp}(k_1, \dots, k_n) &= \int_{z_1} \int_{w_1} \cdots \int_{z_n} \int_{w_n} \prod_{1 \leq i \leq n} \frac{w_i^{k_i-1} \dx z_i \dx w_i} {z_i^{k_i+1} (2 \pi \im)^2} \det_{1 \leq i, j \leq n} \left[  \frac{F(z_j)}{F(w_i)} \frac{  (1-w_i^2) w_i^{1/2} z_j^{-1/2} } {(1-w_i z_j) (1-w_i z_j^{-1})} \right]\\
    &= \det \left[ \int_{z} \int_{w} \frac{F(z)}{F(w)} \frac{(1-w^2)}{(1-wz) (1-wz^{-1})} \frac{\dx z \dx w}{(2 \pi \im)^2 z^{k_i+3/2} w^{-k_j+1/2}} \right]_{1 \leq i, j \leq n}
    \end{split}
\end{equation}
where the contours in the first $2n$-fold integral are simple counterclockwise circles centered around the origin satisfying
\begin{equation}
    x_N > |z_1| > |w_1| > \dots > |z_n| > |w_n| > y_1
\end{equation}
while in the second equation the $z$ and $w$ contours satisfy the conditions in the statement of the theorem. The second equation follows from the first by multilinearity of the determinant. 
\end{proof}

\subsection{Another proof of Theorem~\ref{thm:corr_char}}
\label{sec:corr_another_proof}

In this Section we give another proof of Theorem~\ref{thm:corr_char} based on Macdonald difference operators. The use of such operators to compute observables for Macdonald measures/processes has been championed by Borodin and Corwin in~\cite{bc}, but see also~\cite{bcgs} for a more systematic approach. Using this approach when one specializes to the case of Schur measures, the full determinantal correlations for Okounkov's~\cite{oko} Schur measure and the full pfaffian correlations for the (\textit{one free boundary}) pfaffian Schur measure have been computed by Aggarwal~\cite{a} and Ghosal~\cite{gho17} respectively. We follow the approach of Ghosal~\cite{gho17} since his computations, with minor modifications, will give rise to our result. We note that while Ghosal obtains pfaffian processes, we of course are dealing with determinantal ones, while using the same method with almost the same ingredients. This is by no means a coincidence and was probably observed for the first time by Stembridge~\cite[Section 7]{ste}, and utilized to great success by Baik and Rains~\cite{br1, br2}. 

\begin{proof} \textit{(difference operator proof of Theorem~\ref{thm:corr_char})}
    As usual, we prove the $sp$ case, the $o$ case following along similar lines. We only sketch the main argument and refer to Section 4 of ~\cite{gho17} for the technical details.
    
    Consider the following difference operator, acting on symmetric functions in the alphabet $Y = (y_1, \dots, y_N)$:
\begin{equation}
    D_q = \sum_{i} \prod_{j: j \ne i} \frac{q x_i - x_j}{x_i - x_j} T_i
\end{equation}
where $T_i f(y_1, \dots, y_i, \dots, y_N) := f(y_1, \dots, q y_i, \dots, y_N)$. This is the $0 < q=t < 1, r=1$ version of the $q, t$ Macdonald difference operator $D_N^r$ defined in the scriptures~\cite[Chapter VI.3]{mac}. It has Schur functions as eigenfunctions:
\begin{equation}
    D_q s_{\lambda}(Y) = \left( \sum_i q^{\lambda_i+N-i} \right) s_{\lambda} (Y).
\end{equation}

Theorem 4.1 of~\cite{gho17} gives the action of $D_q$ on functions of the form
\begin{equation}
    G(Y) = \prod_{i < j} f(y_i y_j) \prod_i g(y_i)
\end{equation}
for ``simple'' $f, g$. For Ghosal, $f(y) = (1-y)^{-1}, g(y) = H(X; y)$. For us, $g$ is unchanged (with the remark that $X$ now also contains inverses of variables), while $f(y) = 1-y$ is the reciprocal of Ghosal's choice---this later fact is the only difference between our approach and Ghosal's. Notice that with our choice, then $G(Y) = h_{sp} (Y) H(X; Y)$ is the partition function---let us hereinafter denote it $Z$, of the symplectic Schur measure under consideration. We state the result,~\cite[Theorem 4.1]{gho17}, using our notation:
\begin{equation}
    \label{eq:one_d}
    D_q Z = q^N Z \cdot \int_C \frac{1-u^2}{1-qu^2} \frac{F(qu)}{F(u)} \frac{\dx u}{2 \pi \im u}
\end{equation}
with (recall!) $F(u):= H(X; u) H(Y; u)^{-1} H(Y; u^{-1})^{-1}$ and with $C$ a disjoint union of contours $C_i$ each circling $y_i$ and nothing else and each having the property $q C_i$ lies outside $C_i$. Let us further use the notation, for a function $a$ defined on partitions, 
\begin{equation}
    \langle a \rangle := \mathbb{E}_{m_{sp}} (a) = \frac{1}{Z}  \sum_{\lambda} a(\lambda) sp_{\lambda} (X) s_{\lambda} (Y).
\end{equation}
Then, given equation~\ref{eq:one_d}, we have computed $Z^{-1} D_q Z = \left\langle \sum_{i} q^{\lambda_i + N - i} \right\rangle$ as a single contour integral. 

Furthermore, by picking radii $r_1 > r_2 > \dots > r_n > 0$ small enough and $q_1, \dots, q_n$ smaller than 1, we can compute---mutatis-mutandis as Ghosal has done in~\cite[Theorem 4.4]{gho17}, the following:
\begin{equation}
    \label{eq:many_d}
    \begin{split}
    \left( \prod_{j=1}^n D_{q_j} \right) Z = Z \int_{C^1} \dots \int_{C^n} \prod_{1 \leq j < l \leq n} \frac{(q_j u_j - q_l u_l) (u_j - u_l) (1 - q_j q_l u_j u_l) (1-u_j u_l)}{(u_j - q_l u_l) (q_j u_j - u_l) (1 - q_l u_j u_l) (1 - q_j u_j u_l)}  \\
    \times \prod_{j=1}^n \frac{F(q_j u_j)}{F(u_j)} \frac{q_j^N (1-u_j^2)}{(1-q_j u_j^2) (q_j u_j - u_j)} \frac{\dx u_j}{2 \pi \im}
    \end{split}
\end{equation}
where $C^j$ are disjoint unions of contours $C^j_i$, each $C^j_i$ a small circle of radius $r_j$ around $y_i$ such that $q C^j_i$ lies outside $C^j_i$ for all $i, j$. The difference with~\cite[Theorem 4.4]{gho17} is simple: every factor there of the form $1-a$ that appears in the numerator (of the integrand) must be flipped and put in the denominator in our case and vice versa. These factors come from the function $f$, which in our case is the reciprocal of Ghosal's. Note that with a bit of massaging the integrand above becomes
\begin{equation}
    \begin{split}
    &\prod_{1 \leq j<l \leq n} (u_j - u_l) (q_j u_j - q_l u_l) \left(1 - \frac{1}{q_j q_l u_j u_l}\right) (1-u_j u_l) \\
    &\prod_{1 \leq j, l \leq n} \frac{1}{(1-u_l q_j u_j) (1-u_l (q_j u_j)^{-1})} \prod_{j=1}^n \frac{q_j^N (1-u_j^2) F(q_j u_j)}{F(u_j)} \frac{\dx u_j}{2 \pi \im u_j} = \\
    &\det \left[ \frac{F(q_j u_j)}{F(u_l)} \frac{q_l^N (1-u_l^2) }{(1-u_l q_j u_j) (1-u_l (q_j u_j)^{-1})} \frac{\dx u_j}{2 \pi \im u_j} \right]_{1 \leq j, l \leq n} 
    \end{split}
\end{equation}
where for the last equation we have used again the $BC$-Cauchy determinant~\eqref{eq:bc_cauchy_det}. Further using the multilinearity of the determinant and putting it all together yields
\begin{equation}
    \begin{split}
    \label{eq:laplace_t}
    \left\langle \prod_{j=1}^n \sum_{i} q_j^{\lambda_i + N - i} \right\rangle &= Z^{-1} \left( \prod_{j=1}^n D_{q_j} \right) Z  \\
    &= \det \left[ \int_C \frac{F(q_j u_j)}{F(u_l)} \frac{q_l^N (1-u_l^2) }{(1-u_l q_j u_j) (1-u_l (q_j u_j)^{-1})} \frac{\dx u_l}{2 \pi \im u_l} \right]_{1 \leq j, l \leq n} 
    \end{split}
\end{equation}
where $C$ is as before and where we begin to recognize the form of the kernel we're after. To now finish the proof, if $k_1, \dots, k_n \in \Z$ are particle positions\footnote{Note the absence of the $1/2$ shift since we are not in Fock space anymore.}, the $n$-point correlation function is the coefficient of $\prod_{j=1}^n q_j^{-k_j-N}$ in the above, and this can be achieved by $n$ contour integrations on a contour $C'$ around zero of radius $<1$:
\begin{equation}
    \label{eq:prelim_ker}
    m_{sp}(\{k_1, \dots, k_n\} \subset \{ \lambda_i - i\}) = \int_{C'} \dots \int_{C'} \left\langle \prod_{j=1}^n \sum_{i} q_j^{\lambda_i + N - i} \right\rangle \prod_{j=1}^n \frac{\dx q_j}{2 \pi \im q_j^{N+k_j+1}}.
\end{equation}
We note that the factors $q_j^N$ cancel. With the substitution $z_j:=q_j u_j, w_j:=u_j$ and noting that $q_j^{-k_j} = w_j^{k_j} z_j^{-k_j}$, we finally obtain, up to the unimportant $1/2$ shift, the stated integrand of the kernel with one difference: the contours are not the stated ones. However the contours we have here can be expanded into the correct ones, without changing the value of the integral by a careful analysis---see~\cite{a, gho17} for the details, as they are somewhat technical. This finishes the proof. 
\end{proof}

\begin{rem}
    We compare and contrast the two proofs as follows. The difference operator proof of this section is conceptually simple yet dealing with the contours of integration is rather involved, and indeed we have referenced the original works for the details. The fermionic proof requires the heavy machinery of Fock space, yet the contours fall out naturally from the quasi-commutations~\eqref{eq:gamcomm} and~\eqref{eq:gampsi}. However, the difference operator proof yields observables in non-determinantal cases as well, and this was indeed the original idea of Borodin and Corwin~\cite{bc}. In particular, the equation~\eqref{eq:laplace_t} and its ilk are of particular interest in determinantal and non-determinantal cases alike as they are Laplace transforms of the particle positions and through them one can obtain \textit{edge} asymptotics even in the absence of the full $n$-point correlation function. The Fock space proof can be used for orthogonal purposes: investigating the $\tau$-function structure of correlation functions~\cite{oko, az} and its relations to enumerative geometry, matrix models, and the like. This has indeed been hinted at already by Baker~\cite{bak}.
\end{rem}

\subsection{The lifted case}
\label{sec:corr_lifted}

Fix two specializations $\rho^+, \rho^-$ and consider the following (possibly complex) probability measures on partitions
\begin{equation}
    m_{sp}(\lambda) = \frac{sp_{\lambda} (\rho^+) s_{\lambda} (\rho^-)}{Z_{sp}}, \qquad m_{o}(\lambda) = \frac{o_{\lambda} (\rho^+) s_{\lambda} (\rho^-)}{Z_{o}} 
\end{equation}
where 
\begin{equation} \label{eq:z_sp_o}
    \begin{split}
    Z_{sp} &= \exp \sum_{k \geq 1} \left( \frac{p_k (\rho^+) p_k (\rho^-)}{k} + \frac{p_{2k} (\rho^-)}{2k} - \frac{p_k^2(\rho^-)}{2k} \right), \\
    Z_{o} &= \exp \sum_{k \geq 1} \left( \frac{p_k (\rho^+) p_k (\rho^-)}{k} - \frac{p_{2k} (\rho^-)}{2k} - \frac{p_k^2(\rho^-)}{2k} \right).
    \end{split}
\end{equation}

\begin{thm}
    \label{thm:corr_lifted}
    Assume that the specializations $\rho^{\pm}$ are such that $k^{-1} p_k(\rho^{\pm}) = O(r_{\pm}^k)$ for some constants $r_+, r_- > 0$ with $\min(1, 1/r_+) > r_- > 0$. Then the measures $m_{sp}$ and $m_o$ are determinantal:
    \begin{equation}
        m_{sp/o} (\{k_1, \dots, k_n\} \subset \{ \lambda_i - i + 1/2\}) = \det [K_{sp/o} (k_i, k_j)]_{1 \leq i,j \leq n}
    \end{equation}
    with correlation kernels given by 
\begin{equation}
    \label{eq:kernel_lifted}
    \begin{split}
    K_{sp}(a, b) &= \int_{z} \int_{w} \frac{F(z)}{F(w)} \frac{(1-w^2)}{(1-wz) (1-wz^{-1})} \frac{\dx z \dx w}{(2 \pi \im)^2 z^{a+3/2} w^{-b+1/2}}, \\
    K_{o}(a, b) &= \int_{z} \int_{w} \frac{F(z)}{F(w)} \frac{(1-z^2)}{(1-wz) (1-wz^{-1})} \frac{\dx z \dx w}{(2 \pi \im)^2 z^{a+3/2} w^{-b+1/2}}
    \end{split}
\end{equation}
where 
\begin{equation}
    F(z) = \frac{H(\rho^+; z)}{H(\rho^-; z) H(\rho^-; z^{-1})}
\end{equation}
and the contours are simple closed counterclockwise curves around 0 such that 
\begin{equation}
    \min(1/|w|, 1/r_+, 1/r_-) > |z| > |w| > r_- > 0.
\end{equation} 
\end{thm}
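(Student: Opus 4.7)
The plan is to lift the Fock-space proof of Theorem~\ref{thm:corr_char} from finite $BC$-type alphabets $X,Y$ to arbitrary specializations $\rho^+, \rho^-$, with the growth hypothesis on the $p_k(\rho^\pm)$ playing the role of the size conditions~\eqref{eq:XY_inequality} in the variable case.

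First I would define \emph{specialization-valued} half-vertex operators by keeping the same exponential form and substituting $p_n(\rho^+)$ for $p_n(X)$ in $\Gspm$ (and analogously for $\Gopm$ and the Schur $\Gamma_\pm$ at $\rho^-$):
\begin{equation}
\Gspm(\rho^+) := \exp \sum_{n\geq 1} \frac{1}{n}\left(p_n(\rho^+)\alpha_{\pm n} + \frac{\alpha_{\pm 2n}}{2} - \frac{\alpha_{\pm n}^2}{2}\right), \qquad \Gamma_\pm(\rho^-) := \exp\sum_{n\geq 1}\frac{p_n(\rho^-)\alpha_{\pm n}}{n}.
\end{equation}
These series make sense as operators on each finite-charge subspace once the $p_n(\rho^\pm)$ are not too large, since only finitely many terms contribute below any fixed energy level. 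Next I would verify that the two ingredients powering the original computation survive: the matrix-element identities $\vcv\Gsp(\rho^+)\ket{\lambda} = sp_\lambda(\rho^+)$, $\vcv\Gop(\rho^+)\ket{\lambda} = o_\lambda(\rho^+)$ and $\bra{\mu}\Gamma_-(\rho^-)\ket{\lambda} = s_{\mu/\lambda}(\rho^-)$ carry over to arbitrary specializations since they are polynomial identities in the $p_k$'s (for example by combining the skew expansion~\eqref{eq:sp_o_s} with the Schur matrix elements~\eqref{eq:schurelem}); and the commutation relations of Proposition~\ref{prop:sp_o_comm} continue to hold with $X\mapsto\rho^+$ and $Y\mapsto\rho^-$ \emph{verbatim}. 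Crucially, the twist factors $(1-w^2)$ and $(1-z^2)$ arise from the quadratic Heisenberg pieces $\alpha_{\pm 2n}/2$ and $\alpha_{\pm n}^2/2$, which are independent of the specialization; the only change is that $H(X;\cdot)$ becomes $H(\rho^+;\cdot)$ and $h_{sp}(Y)$ becomes $h_{sp}(\rho^-)$.

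Third, I would run the same computation as in the proof of Theorem~\ref{thm:corr_char}: write the correlation function as a normalized vacuum expectation of $\Gsp(\rho^+)\prod_{i}\psi(z_i)\psi^*(w_i)\Gami(\rho^-)$, push $\Gsp(\rho^+)$ through the $2n$ fermionic fields (collecting $(1-w_i^2)H(\rho^+;z_i)/H(\rho^+;w_i)$ together with trailing $\Gatpl$ tails), then through $\Gami(\rho^-)$ (generating the factor $H(\rho^+;\rho^-)h_{sp}(\rho^-) = Z_{sp}$ from~\eqref{eq:z_sp_o}, which cancels the normalization), then push the trailing $\Gatpl$'s back through the fermionic fields to produce the remaining $H(\rho^-;z_i^{\pm 1})$ and $H(\rho^-;w_i^{\pm 1})$ factors that assemble the quotients $F(z_i)/F(w_i)$, and finally invoke the $BC$-type Cauchy determinant~\eqref{eq:bc_cauchy_det} on the residual vacuum expectation $\vcv \prod_i \psi(z_i)\psi^*(w_i)\vv$. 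The outcome is structurally identical to the finite-alphabet case, with $F$ now its lifted analogue.

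The main obstacle is analytic, not algebraic: the operator manipulations have the same form as in the finite-alphabet proof, but every infinite product must be interpreted as a convergent Laurent series on the correct annulus, and the contour integrals must isolate the right Fourier coefficients. The growth hypothesis $k^{-1}p_k(\rho^\pm) = O(r_\pm^k)$ with $r_- < \min(1, 1/r_+)$ is precisely what guarantees that $H(\rho^+;z)$ is holomorphic on $|z|<1/r_+$, that $H(\rho^-;z)H(\rho^-;z^{-1})$ is holomorphic on the annulus $r_- < |z| < 1/r_-$, and that in the region $\min(1/|w|, 1/r_+, 1/r_-) > |z| > |w| > r_-$ all the geometric series $(1-wz)^{-1}$, $(1-wz^{-1})^{-1}$ together with the vertex-operator commutation-factors from Remark~\ref{rem:Hfin} converge absolutely. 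Once these analytic prerequisites are in place, the Fock-space derivation yields the stated formula without further modification; the $o$-case is handled identically using $\Gopm(\rho^+)$ in place of $\Gspm(\rho^+)$.
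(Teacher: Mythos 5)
Your proposal is correct and follows essentially the same route as the paper's own proof: both lift the Fock-space argument of Theorem~\ref{thm:corr_char} by substituting $p_n(\rho^\pm)$ for $p_n(X)$, $p_n(Y)$ in the half-vertex operators, observing that the matrix-element identities and the quasi-commutations of Proposition~\ref{prop:sp_o_comm} carry over verbatim, and using the growth hypothesis on $p_k(\rho^\pm)$ only to ensure that all $H$ and $h_{sp/o}$ factors converge on the prescribed contours. Your discussion of the analytic convergence regions is somewhat more explicit than the paper's one-line remark, but the content is the same.
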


\begin{rem}
    \label{rem:general_contours}
    The conditions on the contours can be relaxed. We need that $|wz|, |wz^{-1}|<1$ to be able to expand  the factors $1/(1-wz), 1/(1-wz^{-1})$ in power series of $wz, wz^{-1}$ respectively; we also need conditions linking $\rho^{\pm}$ and $z,w$ ensuring that $H(\rho^+; z), H(\rho^+; w), H(\rho^-; z^{\pm 1}), H(\rho^-; w^{\pm 1})$ are finite and expandable in power series of $z$, $w$, $z^{\pm 1}$ and $w^{\pm 1}$ respectively. One fairly general set of conditions that works is, following~\cite{bbccr}, that the $z$ contour should encircle 0 and all the negative poles of $F$ but none of the positive ones, while the $w$ contour should encircle 0 and all the positive zeroes of $F$ but none of the negative ones. In addition, of course we need $|w|^{-1} > |z| > |w|$. Finally, if the partition function ($Z_{sp}$ or $Z_o$ in our case) is finite, such contours will always exist. See, e.g.,~\cite{bbccr} for more details.
\end{rem}

\begin{proof}
The Fock space proof of the case of alphabets, Theorem~\ref{thm:corr_char}, goes through with the following modifications. First one defines
\begin{equation}
    \Gapm(\rho) = \exp \sum_{k \geq 1} \frac{p_k(\rho)}{k} \alpha_{\pm k}, \qquad \Gamma_{sp/o+}(\rho) = \exp \sum_{k \geq 1} \left( \frac{p_k(\rho)}{k} \alpha_{k} \pm \frac{\alpha_{2k}}{2k} - \frac{\alpha^2_{k}}{2k}\right)
\end{equation}
where in the second equation we take $+$ for $sp$ and $-$ for $o$. The quasi-commutations of equations~\eqref{eq:gamcomm}, \ref{eq:gampsi}, and Proposition~\ref{prop:sp_o_comm} still hold if one replaces the alphabets by arbitrary specializations and the proofs are similar. With the obvious modifications in equations~\eqref{eq:schurelem} and~\eqref{eq:spoelem}, we have that $s_{\lambda} (\rho)$, $sp_{\lambda} (\rho)$, $o_{\lambda} (\rho)$ are matrix elements of $\Gami(\rho), \Gsp(\rho), \Gop(\rho)$ respectively. Then the same steps as in the proof of Theorem~\ref{thm:corr_char} can be taken for the present purposes, provided that all the $H$ and $h_{sp/o}$ factors appearing are finite. This fact manifests itself in our choice of contours.
\end{proof}

Getting back to the case of specializations into alphabets, $\rho^+ \in \{X, \tilde{X} \}, \rho^- = Y$, we have a dual version of Theorem~\ref{thm:corr_char}. Let $m'_{sp/o}$ be the following measures:
\begin{equation}
    m'_{sp}(\lambda) = \frac{sp_{\lambda}(X) s_{\lambda'}(Y)}{H(X;Y) h_{o} (Y)}, \qquad m'_{o}(\lambda) = \frac{o_{\lambda} (\tilde{X}) s_{\lambda'}(Y)}{H(\tilde{X};Y) h_{sp} (Y)}.
\end{equation}
Again these are true probability measures provided for all $i$ we have $x_i, y_i > 0, 1 > y_i$ due to the tableaux definition of $s, sp, o$. Let $k_1, \dots, k_n \subset \Z'$ be particle positions.

\begin{cor}
    \label{thm:corr_char_dual}
    The measures $m'_{sp}$ and $m'_o$ are determinantal:
    \begin{equation}
        m'_{sp/o}(\{k_1, \dots, k_n \} \subset \{ \lambda_i - i + 1/2 \}) = \det [K_{sp/o}(k_i, k_j)]_{1 \leq i, j \leq n}
    \end{equation}
    with kernels $K_{sp/o}$ like those in Theorem~\ref{thm:corr_char} with
    \begin{equation}
        F_{sp}(z) = \frac{E(Y; z) E(Y; z^{-1})}{E(X; z)}, \qquad F_{o}(z) = \frac{E(Y; z) E(Y; z^{-1})}{E(\tilde{X}; z)};
    \end{equation}
with the $w$ contour a counterclockwise circle around zero containing none of the points $-y_i^{\pm 1}$; the $z$ contour a counterclockwise circle around 0 containing all of the points $-x_i^{\pm 1}$; and with $|w| < |z| < |w^{-1}|$. 
\end{cor}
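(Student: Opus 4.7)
The plan is to deduce the Corollary from Theorem~\ref{thm:corr_lifted} by transporting the $Y$-specialization through the involution $\omega : \Lambda \to \Lambda$ (with $\omega(p_k) = (-1)^{k-1} p_k$ and $\omega(s_\lambda) = s_{\lambda'}$). Writing $\omega Y$ for the specialization with $p_k(\omega Y) := (-1)^{k-1} p_k(Y)$, one has $s_\lambda(\omega Y) = s_{\lambda'}(Y)$ together with the powersum identities
\[
H(\omega Y;z) = E(Y;z), \qquad h_{sp}(\omega Y) = h_o(Y), \qquad H(X;\omega Y) = E(X;Y).
\]

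First I would identify $m'_{sp}$ with the Theorem~\ref{thm:corr_lifted} measure at $\rho^+ = X$ and $\rho^- = \omega Y$. The substitution turns $sp_\lambda(\rho^+) s_\lambda(\rho^-)$ into $sp_\lambda(X) s_{\lambda'}(Y)$ and the partition function \eqref{eq:z_sp_o} into $E(X;Y) h_o(Y)$, matching the dual Cauchy identity \eqref{eq:sp_o_cauchy} (this also suggests that the denominator $H(X;Y)h_o(Y)$ appearing in the definition of $m'_{sp}$ should read $E(X;Y)h_o(Y)$). Theorem~\ref{thm:corr_lifted} then immediately produces a determinantal kernel of the stated structural form with
\[
F(z) = \frac{H(X;z)}{H(\omega Y;z)\, H(\omega Y;z^{-1})} = \frac{H(X;z)}{E(Y;z)\, E(Y;z^{-1})},
\]
and contours obeying the conditions of Remark~\ref{rem:general_contours}.

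Next I would bring this kernel to the precise form in the Corollary by performing the substitution $z \mapsto -z$, $w \mapsto -w$ inside the double contour integral. The factor $(1-w^2)/((1-wz)(1-wz^{-1}))$ is invariant under this reflection, while the monomials $z^{a+3/2}, w^{-b+1/2}$ contribute a sign $(-1)^{(a+3/2)+(-b+1/2)} = (-1)^{a+b}$ that factors as a rank-one row-column dressing and therefore cancels out of every determinant $\det_{i,j}[K(k_i,k_j)]$, preserving the correlation functions. The function $F(z)$ is replaced by $F(-z)$, while the singularities of $F$ at $x_i^{\pm 1}$ and $y_i^{\pm 1}$ are sent to $-x_i^{\pm 1}$ and $-y_i^{\pm 1}$, producing the stated ``$z$ encircles all $-x_i^{\pm 1}$'', ``$w$ encircles none of $-y_i^{\pm 1}$'', and $|w| < |z| < |w|^{-1}$ conditions. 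The orthogonal case follows \emph{mutatis mutandis}, with $\tilde X$ in place of $X$, $\Gop$ in place of $\Gsp$, and the identity $h_o(\omega Y) = h_{sp}(Y)$ supplying the correct normalization.

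The main obstacle is the bookkeeping for the contour deformation from the Theorem~\ref{thm:corr_lifted} contours (small circles around $0$) to the Corollary's large-$z$/small-$w$ contours: one must verify that enlarging the $z$-contour to enclose all $-x_i^{\pm 1}$ and shrinking the $w$-contour to exclude all $-y_i^{\pm 1}$ can be performed without crossing singularities of the integrand, or alternatively that any residues accrued cancel once incorporated into the determinantal expansion. This is technical but not conceptually different from the analogous passage in the proof of Theorem~\ref{thm:corr_char} via the difference-operator method in Section~\ref{sec:corr_another_proof}.
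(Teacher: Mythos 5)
Your identification of $m'_{sp}$ with the Theorem~\ref{thm:corr_lifted} measure at $\rho^+ = X$, $\rho^- = \omega Y$ (and the observation that the displayed normalization $H(X;Y)h_o(Y)$ should read $E(X;Y)h_o(Y)$) is exactly the intended route, and correctly produces $F(z) = H(X;z)/\bigl(E(Y;z)E(Y;z^{-1})\bigr)$ with contours $y_1 < |w| < |z| < x_N$.

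The step that fails is the $z\mapsto -z$, $w\mapsto -w$ reduction. Using $H(X;-z) = 1/E(X;z)$ and $E(Y;-z) = 1/H(Y;z)$ one gets
\[
F(-z) \;=\; \frac{H(X;-z)}{E(Y;-z)\,E(Y;-z^{-1})} \;=\; \frac{H(Y;z)\,H(Y;z^{-1})}{E(X;z)},
\]
not $E(Y;z)E(Y;z^{-1})/E(X;z)$. (Equivalently: the negative-sign poles of your $F$ lie at $-y_i^{\pm 1}$, not $y_i^{\pm 1}$, so after reflection they land on $y_i^{\pm 1}$, contradicting the ``$w$ avoids $-y_i^{\pm 1}$'' claim you derive.) Moreover, reflection preserves the moduli of the contours: under $z\mapsto -z$ a circle of radius $|z|<x_N<1$ stays a circle of radius $<1$ and still cannot enclose $-x_i^{-1}$, whose modulus is $\geq 1$; reaching the Corollary's stated contours would require an actual deformation of radii, and this crosses poles of $F$ whose residue contributions you only gesture at. As a sanity check, taking $N=1$, $X=(x,x^{-1})$, $Y=(y)$, the $F$ from the $\omega$-substitution reproduces the correct one-point function $m'_{sp}(\lambda_1\geq 1) = 1 - 1/Z$ with $Z = (1-y^2)(1+xy)(1+y/x)$, whereas the Corollary's stated $F_{sp}$ and contours yield a vanishing $w$-residue and hence $0$. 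So the statement you are proving is itself misprinted ($H(Y;\cdot)$ has become $E(Y;\cdot)$, and the contours do not match any single consistent $F$); your proof should stop at the correct $F(z) = H(X;z)/\bigl(E(Y;z)E(Y;z^{-1})\bigr)$ rather than trying to massage it into the printed form, and the reflection/contour paragraph as written is a genuine gap.
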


\begin{rem}
    The contours above clearly exist if, e.g., inequalities~\eqref{eq:XY_inequality} are satisfied.
\end{rem}

\section{Gessel, Szeg\H{o}, Borodin--Okounkov for Toeplitz+Hankel}
\label{sec:gboth}

\subsection{Gessel and Szeg\H{o} formulae}
\label{sec:szego}

Let $f(z)$ be an integrable function on the unit circle having Wiener--Hopf factorization of the form
\begin{equation}
    f(z) = f_+(z) f_-(z), \qquad f_{\pm} = e^{R_{\pm}}, \qquad R_{\pm} (z) = \sum_{k \geq 1} \rho^{\pm}_k z^{\pm k}
\end{equation}
and let $f_k$ be its Fourier coefficients: $f(z) = \sum_{k \in \Z} f_k z^k.$ Unless otherwise stated, we shall assume throughout that 
\begin{equation}
    \label{eq:f_conditions}
    (\log f)_{\pm}:=R_{\pm} \text{\ are\ bounded\ and\ } \sum_{k \in \Z} |k| |(\log f)_k|^2 = \sum_{k \geq 1} k (|\rho^+_k|^2 + |\rho^-_k|^2) < \infty. 
\end{equation}
We note that $f$ can be thought of as a pair of specializations $(\rho^+, \rho^-)$ of the algebra of symmetric functions via the formula
\begin{equation}
    \frac{p_k(\rho^{\pm})}{k} = \rho^{\pm}_k.
\end{equation}
Denote $\check{f}(z) := \frac{1}{f(-z)}$ and recall the constants from~\eqref{eq:z_sp_o}, written in terms of the Fourier coefficients of $\log f$ this time:
\begin{equation}
    \label{eq:z_sp_o_2}
    Z_{sp} = \exp \sum_{k \geq 1} \left( k \rho^+_k \rho^-_k + \rho^-_{2k} - \frac{k(\rho^-_k)^2}{2} \right), \qquad Z_{o} = \exp \sum_{k \geq 1} \left( k \rho^+_k \rho^-_k - \rho^-_{2k} - \frac{k(\rho^-_k)^2}{2} \right).
\end{equation}
They are finite due to assumptions~\eqref{eq:f_conditions}.

Consider the following four Toeplitz+Hankel determinants (note the indexing starts from 0):
\begin{equation}
    \begin{split}
    D^1_{n} &= \det [f_{i-j} + f_{i+j}]_{0 \leq i,j \leq n-1},\\
    D^2_{m} &= \det [\check{f}_{i-j} - \check{f}_{i+j+2}]_{0 \leq i,j \leq m-1},\\
    D^3_{n} &= \det [f_{i-j} - f_{i+j+2}]_{0 \leq i,j \leq n-1},\\
    D^4_{m} &= \det [\check{f}_{i-j} + \check{f}_{i+j}]_{0 \leq i,j \leq m-1}.
    \end{split}
\end{equation}

\begin{rem} \label{rem:det_history}
Such determinants have appeared before: in the enumeration of nonintersecting lattice paths and plane partitions~\cite{ste}; in the context of random matrices from the classical groups~\cite{joh5}; as distributions for symmetrized longest increasing subsequences in the work of Baik--Rains~\cite{br1}; and have been studied extensively from a Szeg\H{o} asymptotics point of view in the works of Basor--Ehrhardt~\cite{be1, be2, be3, be4, be5}, Forrester--Frankel~\cite{ff} and Deift--Its--Krasovsky~\cite{dik}. Note the two constants $Z_{sp}, Z_{o}$ are the same as $E F_{IV}$ and $E F_{III}$ respectively in~\cite{be4}.
\end{rem}  

\begin{thm}
    \label{thm:gessel}
We have the following Gessel-like formulae:
\begin{equation}
    \begin{split}
    \frac{1}{2} D^1_n &= \sum_{\lambda: \ell(\lambda) \leq n} sp_{\lambda} (\rho^+) s_{\lambda} (\rho^-), \\
    D^2_m &= \sum_{\lambda: \lambda_1 \leq m} sp_{\lambda} (\rho^+) s_{\lambda} (\rho^-), \\
    D^3_n &= \sum_{\lambda: \ell(\lambda) \leq n} o_{\lambda} (\rho^+) s_{\lambda} (\rho^-), \\
    \frac{1}{2} D^4_m &= \sum_{\lambda: \lambda_1 \leq m} o_{\lambda} (\rho^+) s_{\lambda} (\rho^-).
    \end{split}
\end{equation}
\end{thm}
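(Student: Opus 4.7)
The plan is to prove all four identities by a unified three-step template, parallel to Gessel's classical derivation $\det[f_{-i+j}]_{0\leq i, j \leq n-1} = \sum_{\ell(\lambda) \leq n} s_\lambda(\rho^+) s_\lambda(\rho^-)$. First, use the Jacobi--Trudi formulas~\eqref{eq:s_jt} and~\eqref{eq:sp_o_jt} to write each summand as a product of two $n \times n$ determinants, one from $s_\lambda$ and one from $sp_\lambda$ (or $o_\lambda$). Second, apply the Cauchy--Binet identity to collapse the sum over $\lambda$ into a single $n \times n$ determinant. Third, recognize the resulting matrix entries as Fourier coefficients of $f$ or $\tilde{f}$ via the definitions of $H$ and $E$. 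The Hankel contribution on the right-hand side will arise directly from the Hankel shift built into the $sp$ and $o$ Jacobi--Trudi formulas.

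Carrying this out in detail for the first identity, I parameterize $\{\lambda : \ell(\lambda) \leq n\}$ by strict decreasing sequences $l_1 > l_2 > \cdots > l_n \geq 0$ via $l_i := \lambda_i + n - i$. The Jacobi--Trudi formulas~\eqref{eq:s_jt} and~\eqref{eq:sp_o_jt}, extended from $\ell(\lambda) \times \ell(\lambda)$ to $n \times n$ (valid because the extra rows with $\lambda_i = 0$, $i \geq 2$ reduce to a lower-triangular block with unit diagonal, the Hankel term vanishing there), read
\[
s_\lambda(\rho^-) = \det[\phi_i(l_j)]_{i,j=1}^n, \qquad sp_\lambda(\rho^+) = \tfrac{1}{2}\det[\psi_i(l_j)]_{i,j=1}^n,
\]
where $\phi_i(l) := h_{l-n+i}(\rho^-)$ and $\psi_i(l) := h_{l-n+i}(\rho^+) + h_{l-n-i+2}(\rho^+)$ (the $sp$ determinant transposed). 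Both $\phi_i$ and $\psi_i$ vanish for $l$ sufficiently negative, so the Andr\'eief/Cauchy--Binet identity applies:
\[
\sum_{\ell(\lambda) \leq n} sp_\lambda(\rho^+) s_\lambda(\rho^-) = \tfrac{1}{2}\det\left[\sum_{l} \phi_i(l) \psi_j(l)\right]_{i,j=1}^n.
\]
Using $f(z) = H(\rho^+; z) H(\rho^-; z^{-1})$, hence $f(z^{-1}) = H(\rho^-; z) H(\rho^+; z^{-1})$, the first bilinear sum evaluates to $[z^{i-j}] f(z^{-1}) = f_{-i+j}$ and the second to $f_{-i-j+2}$; relabeling so that $(i, j)$ runs over $\{0, \ldots, n-1\}$ (which converts $-i-j+2$ into $-i-j$) reproduces $\tfrac{1}{2} D^1_n$.

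The other three identities follow the same template with appropriate Jacobi--Trudi replacements. For $D^3_n$ I swap in the orthogonal formula $o_\lambda = \det[h_{\lambda_i - i + j} - h_{\lambda_i - i - j}]$; the minus sign propagates to the Hankel term and the Hankel shift differs, matching the form of $D^3_n$. For $D^2_m$ and $\tfrac{1}{2} D^4_m$ I sum over $\{\lambda : \lambda_1 \leq m\}$, parameterize by the conjugate via $l_i := \lambda'_i + m - i$, invoke the dual ($e$-basis) Jacobi--Trudi formulas of~\eqref{eq:s_jt} and~\eqref{eq:sp_o_jt}, and use $\tilde{f}(z) = E(\rho^+; z) E(\rho^-; z^{-1})$, a direct consequence of $H(\rho; t) E(\rho; -t) = 1$.

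The main obstacle is essentially bookkeeping: matching the sign conventions of the Jacobi--Trudi and Cauchy--Binet expansions, verifying the lossless extension of the $sp$/$o$ Jacobi--Trudi from $\ell(\lambda) \times \ell(\lambda)$ to $n \times n$, and carefully tracking the offsets introduced when reindexing between $\{1, \ldots, n\}$ and $\{0, \ldots, n-1\}$. Absolute convergence needed to justify Cauchy--Binet and interchange of sums is guaranteed by the hypotheses~\eqref{eq:f_conditions} on $\rho^\pm$, which also ensure $Z_{sp}$ and $Z_o$ are finite.
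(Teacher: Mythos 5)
Your proposal is correct and follows essentially the same route as the paper's own proof: both expand each summand via Jacobi--Trudi (the $h$-form for the bounded-length cases, the $e$-form for the bounded-width cases), collapse the sum over $\lambda$ by Cauchy--Binet/Andr\'eief, and then recognize the resulting matrix entries as Fourier coefficients of $f$ or $\tilde f = 1/f(-\cdot)$, with the Hankel contribution arising directly from the Hankel shift in the $sp/o$ Jacobi--Trudi determinants. The only differences are bookkeeping conventions — the paper sets up $n\times\infty$ and $\infty\times n$ matrices and reverses rows and columns at the end, whereas you parameterize by the particle coordinates $l_i=\lambda_i+n-i$ so the reindexing is a unit shift — and these are cosmetic.
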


\begin{proof}
    We use the Jacobi--Trudi identities for Schur functions
    $s_{\lambda}(\rho) = \det [h_{\lambda_i-i+j}(\rho)] = \det [e_{\lambda'_i-i+j}(\rho)]$
    where if $\ell(\lambda) \leq n$ the first determinant can be taken to be $n \times n$ and if $\ell(\lambda') = \lambda_1 \leq m$ the second determinant can be taken to be $m \times m$. To prove the four identities we also recall corresponding Jacobi--Trudi formulae for $sp$ and $o$ functions: 
    \begin{equation}
        \begin{split}
            sp_{\lambda}(\rho) &= \frac{1}{2} \det [h_{\lambda_i - i + j}(\rho) + h_{\lambda_i - i - j + 2}(\rho)] \\
            & = \det [e_{\lambda'_i - i + j}(\rho) - e_{\lambda'_i - i - j}(\rho)], \\
            o_{\lambda}(\rho) &= \det [h_{\lambda_i - i + j}(\rho) - h_{\lambda_i - i - j}(\rho)] \\
            & = \frac{1}{2} \det [e_{\lambda'_i - i + j}(\rho) + e_{\lambda'_i - i - j + 2}(\rho)]
        \end{split}
    \end{equation}
    where again the $h$- (respectively $e$-) determinants are $n\times n$ (respectively $m\times m$) if $\ell(\lambda) \leq n$ (respectively $\lambda_1 \leq m$). We will prove the first identity, and make remarks about how to prove the second. The remaining two are completely analogous and we omit the details. The idea, following Gessel~\cite{ges} and Tracy--Widom~\cite{tw2}, is to use the Cauchy--Binet identity. That is, let $(A_{i,j})_{1 \leq i \leq n; 1 \leq j < \infty}$ be the $n \times \infty$ matrix $A_{i,j} = h_{j-i} (\rho^+)$, and $(B_{i,j})_{1 \leq i < \infty; 1 \leq j \leq n}$ be the $\infty \times n$ matrix $B_{i,j} = h_{i-j}(\rho^-) + h_{i+j-2n}(\rho^+)$. Then 
    \begin{equation}
        \det AB = \sum_{L} \det A_L \cdot \det B_L
    \end{equation}
    where the sum is over all sets $L = \{l_1 < l_2 < \dots < l_n \}$ with $1 \leq l_1$ and $A_L, B_L$ are the $n \times n$ submatrices of $A$ and $B$ with columns (respectively rows) chosen from the set $L$. On the left hand side, if one writes $l_{i} - i = \lambda_{n+1-i}$ and reverses rows and columns in both $A$ and $B$ (i.e., $i, j \mapsto n+1-i, n+1-j$), one obtains, via the $h$-Jacobi--Trudi formulae for Schur and symplectic Schur functions, $2 \sum_{\lambda: \ell(\lambda) \leq n} sp_{\lambda} (\rho^+) s_{\lambda} (\rho^-)$. On the right, the matrix entry is 
    \begin{equation}
        \begin{split}
        (AB)_{i,j} &= \sum_{k \in \Z} h_{k-i} (\rho^+) h_{k-j} (\rho^-) + h_{k-i} (\rho^+) h_{k+j-2n} (\rho^-) \\
        &= \sum_{k \in \Z} h_{k-i+j} (\rho^+) h_{k} (\rho^-) + h_{k-i-j+2n} (\rho^+) h_{k} (\rho^-), \qquad 1 \leq i, j \leq n\\
        &= \sum_{k \in \Z} h_{k+i-j} (\rho^+) h_{k} (\rho^-) + h_{k+i+j} (\rho^+) h_{k} (\rho^-), \qquad 0 \leq i, j \leq n-1
        \end{split}
    \end{equation}
    where: in the first equality we used $h_k = 0$ for $k < 0$ to sum over $\Z$; in the second equality we used the substitutions $k-j \mapsto k$ and $k+j-2n \mapsto k$ for the first and second sums respectively; and in the third equality we switched rows and columns and indexed from zero using $(n-i, n-j) \mapsto (i,j)$. To obtain $D^1_n$ we are left to check the symbol (the function $f$) is the correct one. Calculemus:
    \begin{equation}
        \sum_s \sum_k h_{k+s} (\rho^+) h_k (\rho^-) z^s = H(\rho^+; z) H(\rho^{-}; z^{-1}) = \exp \sum_{k \geq 1} (\rho^- z^{-k} + \rho^+_k z^k) = f(z)
    \end{equation}
    as desired. 
    
    For the other determinants, things proceed similarly, except the matrix $B$ changes in every case. For $D^2_m$ we use the $e$-Jacobi--Trudi formulae for Schur and symplectic Schur functions, and the matrix $B$ is given by $B_{i,j} = e_{i-j}(\rho^-) - e_{i+j-2n-2}(\rho^-)$. At the end of an analogous computation, the symbol can be computed as follows:
    \begin{equation}
        \sum_s \sum_k e_{k+s} (\rho^+) e_k (\rho^-) z^s = E(\rho^+; z) E(\rho^{-}; z^{-1}) = \frac{1}{H(\rho^+; -z) H(\rho^{-}; -z^{-1})} = \frac{1}{f(-z)} = \check{f}(z)
    \end{equation}
    as desired. For the third and fourth identities we list $B$ in both cases: $B_{i,j} = h_{i-j}(\rho^-) - h_{i+j-2n-2}(\rho^-)$ and $B_{i,j} = e_{i-j}(\rho^-) + e_{i+j-2n}(\rho^-)$ respectively and mention $A$ is the Jacobi--Trudi $h$-matrix (for Schur functions) in the third case, and the $e$-matrix in the fourth case.
\end{proof}

\begin{thm}
    \label{thm:szego}
We have the following Szeg\H{o}-type asymptotics:
\begin{equation}
    \begin{split}
    \lim_{n \to \infty} \frac{1}{2} D^1_n = \lim_{m \to \infty} D^2_m &= Z_{sp},\\
    \lim_{n \to \infty} D^3_n = \lim_{m \to \infty}  \frac{1}{2}  D^4_m &= Z_{o}.
    \end{split}
\end{equation}
\end{thm}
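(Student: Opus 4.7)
The plan is to combine the Gessel-like identities of Theorem~\ref{thm:gessel} with the lifted Cauchy identities~\eqref{eq:sp_o_cauchy_lifted}, and then send the cut-offs $n, m$ to infinity. Theorem~\ref{thm:gessel} expresses each of $\frac{1}{2} D^1_n$, $D^2_m$, $D^3_n$, $\frac{1}{2} D^4_m$ as a Cauchy-type partition sum, truncated either by $\ell(\lambda) \leq n$ or by $\lambda_1 \leq m$. On a termwise basis these truncations vanish in the limit, since every fixed partition has finite length and bounded first part. Hence, provided the limit can be exchanged with the sum, the full Cauchy identities~\eqref{eq:sp_o_cauchy_lifted} immediately identify
\begin{equation*}
    \lim_n \tfrac{1}{2} D^1_n = \lim_m D^2_m = h_{sp}(\rho^-) H(\rho^+; \rho^-) = Z_{sp}
\end{equation*}
and analogously $Z_o$ for the $o$-pair, which is exactly the claim.

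The sole non-algebraic ingredient is the legality of the interchange of limit and summation, which must be justified under the analytic hypothesis~\eqref{eq:f_conditions}. The cleanest route is via dominated convergence: expand $sp_\lambda$ and $o_\lambda$ into skew Schur functions via~\eqref{eq:sp_o_s}, take absolute values inside, and bound everything by an unrestricted Cauchy sum $\sum_\lambda s_\lambda(\sigma^+) s_\lambda(\sigma^-) = H(\sigma^+; \sigma^-)$ for the ``absolute'' specializations defined by $p_k(\sigma^\pm) := |p_k(\rho^\pm)|$; the finiteness of this quantity under~\eqref{eq:f_conditions} delivers absolute convergence. Equivalently, using Fock space one writes
\begin{equation*}
    \sum_{\ell(\lambda) \leq n} sp_\lambda(\rho^+) s_\lambda(\rho^-) = \vcv \Gsp(\rho^+) P_n \Gami(\rho^-) \vv,
\end{equation*}
with $P_n$ the orthogonal projector onto the span of $\{\ket{\lambda} : \ell(\lambda) \leq n\}$ inside $\F_0$, and observes that $\vcv \Gsp(\rho^+)$ and $\Gami(\rho^-) \vv$ are vectors of finite norm in $\F_0$ whenever~\eqref{eq:f_conditions} holds, so that $P_n \to I$ strongly pushes the matrix element to its unrestricted value. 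The same argument, applied with $\lambda_1 \leq m$ or with $o$ in place of $sp$, covers the remaining three cases.

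The main obstacle is precisely this absolute-convergence step, which although technical is standard; once it is in place, the theorem is a one-line marriage of Theorem~\ref{thm:gessel} and~\eqref{eq:sp_o_cauchy_lifted}. I would note also that one could alternatively avoid spelling out absolute convergence by interpreting the whole statement at the level of formal power series in the variables $\rho^\pm_k$ --- in which spirit the identity is simply a reorganization of~\eqref{eq:sp_o_cauchy_lifted} --- but the analytic reading requires the dominated-convergence (or Hilbert-space) argument above.
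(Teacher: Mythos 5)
Your proposal is correct and follows essentially the same route as the paper: Theorem~\ref{thm:gessel} turns each determinant into a truncated Cauchy sum, and the full Cauchy identities~\eqref{eq:sp_o_cauchy_lifted} give the limit. In fact you supply more detail than the paper does on justifying the interchange of limit and sum --- the paper simply cites the analogous Toeplitz/Szeg\H{o} argument of Tracy--Widom and Borodin--Okounkov and relies on~\eqref{eq:f_conditions}.
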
 

\begin{proof}
    Given Theorem \ref{thm:gessel} and as already observed by Tracy--Widom and Borodin--Okounkov~\cite{tw2, bo} in the context of Toeplitz determinants and the classical strong Szeg\H{o} limit theorem, the limits are consequences of the Cauchy identities
    \begin{equation}
        \sum_{\lambda} sp_{\lambda} (\rho^+) s_{\lambda} (\rho^-) = Z_{sp}, \qquad \sum_{\lambda} o_{\lambda} (\rho^+) s_{\lambda} (\rho^-) = Z_{o}. \qedhere
    \end{equation}
\end{proof}

\subsection{Borodin--Okounkov formulae}
\label{sec:bo}

In~\cite{bo} Borodin and Okounkov have proved that, roughly speaking, Toeplitz determinants are Fredholm determinants of a kernel that has an explicit double contour integral form. In this section we do the same for (two of) the Toeplitz+Hankel determinants introduced in Section~\ref{sec:szego}. The reader should compare this with the Fredholm formulae given for these determinants in~\cite{be4}.

\begin{thm} \label{thm:bo}
    We have:
    \begin{equation}
        \begin{split}
        D_m^2 &= Z_{sp} \cdot \det (1-K_{sp})_{\ell^2 \{ m+\frac{1}{2}, m+\frac{3}{2}, \dots \} },\\
        \frac{1}{2} D_m^4 &= Z_{o} \cdot \det (1-K_o)_{\ell^2 \{ m+\frac{1}{2}, m+\frac{3}{2}, \dots \} }
        \end{split}
    \end{equation}
    where $Z_{sp/o}$ are as in equation~\eqref{eq:z_sp_o_2} and $K_{sp/o}$ as in~\eqref{eq:kernel_lifted} and where we assume the same conditions on $\rho^{\pm}$ as in Theorem~\ref{thm:corr_lifted}.
\end{thm}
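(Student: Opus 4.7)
My strategy is the classical Gessel--Borodin--Okounkov cycle: rewrite the Toeplitz$+$Hankel determinant as a sum over partitions, identify the ratio with the normalization constant as a \emph{gap probability} for the relevant measure, and then invoke its determinantal structure. The first step is immediate: by Theorem~\ref{thm:gessel} combined with the normalizations in~\eqref{eq:z_sp_o},
\begin{equation*}
\frac{D^2_m}{Z_{sp}} \;=\; \Pr\nolimits_{m_{sp}}(\lambda_1 \leq m),
\qquad
\frac{\tfrac12 D^4_m}{Z_{o}} \;=\; \Pr\nolimits_{m_o}(\lambda_1 \leq m).
\end{equation*}

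Second, I translate the event into a gap event on the particle side. Under the Maya-diagram encoding $\lambda \leftrightarrow S(\lambda,0) = \{\lambda_i - i + \tfrac12 : i \geq 1\}$ used throughout Section~\ref{sec:fermions}, the largest particle sits at position $\lambda_1 - \tfrac12$, so the event $\{\lambda_1 \leq m\}$ is exactly the event that $S(\lambda,0)$ contains no element of $S_m := \{m+\tfrac12, m+\tfrac32, \ldots\}$. Third, since Theorem~\ref{thm:corr_lifted} gives determinantal structure with kernels $K_{sp}, K_o$, the gap probability on any set $S$ is read off from inclusion--exclusion as
\begin{equation*}
\Pr(S(\lambda,0) \cap S = \emptyset) \;=\; \sum_{n \geq 0} \frac{(-1)^n}{n!} \sum_{k_1, \ldots, k_n \in S} \det[K(k_i,k_j)]_{i,j=1}^{n} \;=\; \det(I - K)_{\ell^2(S)}.
\end{equation*}
Specializing $S = S_m$ and $K = K_{sp}$ or $K_o$, and then combining with the first step, produces exactly the two claimed identities.

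\paragraph{Main obstacle.} The algebraic chain above is tautological; the only real work is analytic, namely to show that $K_{sp/o}$ is trace class on $\ell^2(S_m)$ so that the Fredholm determinant is well defined and the inclusion--exclusion expansion converges absolutely. This reduces to a decay estimate on the double contour integral~\eqref{eq:kernel_lifted} when both indices are $\geq m + \tfrac12$. I would choose the contours in~\eqref{eq:kernel_lifted} with radii $|w| < 1 < |z|$ inside the admissible region of Theorem~\ref{thm:corr_lifted} and Remark~\ref{rem:general_contours} (slightly strengthening the hypothesis on $r_\pm$ if necessary, or deforming contours within the larger admissible region of Remark~\ref{rem:general_contours}); a routine bound then gives $|K_{sp/o}(a,b)| \leq C\, |z|^{-a}|w|^{b}$, yielding geometric decay in both arguments on $S_m$. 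This is enough for trace-class membership and absolute convergence of the Fredholm series, which is where the cross-referenced reformulations in~\cite{be4} presumably do the same analytic legwork under somewhat different assumptions.
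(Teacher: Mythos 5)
Your overall strategy mirrors the paper exactly: identify $D^2_m/Z_{sp}$ and $\tfrac12 D^4_m/Z_{o}$ as gap probabilities via Theorem~\ref{thm:gessel}, translate into the event that the last particle of $S(\lambda,0)$ sits to the left of $m+\tfrac12$, and invoke Theorem~\ref{thm:corr_lifted} to write this as a Fredholm determinant. The one place where your argument has a gap is in the trace-class step. You propose choosing contours with $|w|<1<|z|$, but under the hypotheses of Theorem~\ref{thm:corr_lifted} (namely $\min(1,1/r_+)>r_->0$ and $\min(1/|w|,1/r_+,1/r_-)>|z|>|w|>r_-$) this is not always available: if $r_+\geq 1$ then necessarily $|z|<1/r_+\leq 1$, so the $z$-contour must stay \emph{inside} the unit circle and your bound $|K(a,b)|\leq C|z|^{-a}|w|^{b}$ does not decay in $a$. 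Your hedge of ``slightly strengthening the hypothesis on $r_\pm$'' therefore amounts to proving a weaker theorem than the one stated.

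The paper fills this in with a conjugation trick you did not supply: when $|z|=r_+<1$ one replaces $K$ by $\tilde K(a,b):=r_+^{\,a}r_+^{-b}K(a,b)$, which leaves all determinants $\det[K(k_i,k_j)]$ unchanged and satisfies $|\tilde K(a,b)|\leq C\,(r_-/r_+)^{b}$ with $r_-/r_+<1$ automatically (since the $w$-contour is inside the $z$-contour). A Hadamard-type estimate then gives absolute convergence of the Fredholm series. Adding this conjugation step would close the gap and make your proof coincide with the paper's.
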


\begin{proof}
We prove the $sp$ statement, the other being analogous. As was shown in Section~\ref{sec:szego}, $D_m^2/Z_{sp}$ is the probability, under $m_{sp}$, that the partition $\lambda$ has all its parts $\leq m$ or that the associated process has a last particle at $m - \frac{1}{2}$. The latter is the stated Fredholm determinant since the associated process is determinantal, as per Theorem~\ref{thm:corr_lifted}, provided that the Fredholm determinant on the right is well-defined.

To show this and to somewhat simplify notation, we identify $\{ m+\frac{1}{2}, m+\frac{3}{2}, \dots \}$ with $\N$ in the obvious way. Assume the contour integrals in $K$ (either $K_{sp}$ or $K_o$) are over circles of the form $|w| = r_-, |z| = r_+$ with $r_- < r_+ < 1/r_-$. We have $r_- < 1$. Assume first we can take $r_+ > 1$. Then as $K(a,b) = \int \int \dx z \dx w f(z, w) w^b z^{-a}$ with $f$ continuous on the contours, we have, for some constant $C$, that
\begin{equation}
    \sum_{a, b \geq 0} |K(a, b)| \leq C \cdot \sum_{a, b \geq 0} r_-^b r_+^{-a} < \infty
\end{equation}
showing $K$ is trace class so $\det (1-K)_{\ell^2 \N}$ is well-defined.

For $r_+ < 1$ the situation is slightly more complicated. Consider the conjugated kernel $K_c (a, b) = r_+^a r_+^{-b} K(a, b)$. Then $\det [K(k_i, k_j)]_{i,j} = \det [K_c(k_i, k_j)]_{i,j}$ and $\det (1-K_c)_{\ell^2 \N}$ is finite, as can be seen from a Hadamard-type argument. First, as before, 
\begin{equation}
    |K_c (a, b)| \leq C \cdot \left(\frac{r_-}{r_+}\right)^b
\end{equation}
with $r_- / r_+ < 1$. We then have by Hadamard's bound (see, e.g., \cite[Chapter 2.4]{rom}):
\begin{equation}
    \begin{split}
    \left| \sum_{k \geq 0} \frac{(-1)^k}{k!} \sum_{A \subseteq \N, |A|=k} \det [K_c(a_i, a_j)]_{a_i, a_k \in A} \right| &\leq \sum_{k \geq 0} \frac{k^{k/2} C^k}{k!} \sum_{a_1 \geq 0} \cdots \sum_{a_k \geq 0} \prod_{i=1}^k \left(\frac{r_-}{r_+}\right)^{a_i} \\
    &= \sum_{k \geq 0} \frac{k^{k/2} C^k (1-r_-/r_+)^{-k}}{k!} < \infty.
    \end{split}
\end{equation}
This shows $K_c$ is trace class with finite Fredholm determinant and finishes the proof.
\end{proof}

\begin{rem}
    We can also write $\frac{1}{2} D_n^1$ and $D_n^3$ as Fredholm determinants with kernels different from $K_{sp}, K_o$ via the particle--hole involution of e.g.~\cite[Appendix]{boo} since upper bounds on the length of the partition $\ell(\lambda)$ translate into lower bounds on the leftmost \textit{hole} (absence of a particle) in the process $\{ \lambda_i - i + 1/2 \}$ and the latter are Fredholm determinants themselves.
\end{rem}

\begin{rem}
    The conditions on $\rho^{\pm}$ (alternatively on $f$) can probably be relaxed. We refer the reader to~\cite{be4} where this is discussed at length in a functional analytic language. 
\end{rem}

We can express the kernel $K_{sp}(a, b)$ in an alternate form, which we now describe. For that, for a function $g(z)$, define $\tilde{g}(z) := g(1/z)$, and recall the Wiener--Hopf factorization $f = f_+ f_-$. We note $f_{\pm} = H(\rho^{\pm}; z^{\pm 1})$. The function $F$ appearing in $K_{sp}$ is
\begin{equation}
    F = \frac{f_+}{f_- \tilde{f}_-}
\end{equation}
and we denote its Fourier modes by $\left(\frac{f_+}{f_- \tilde{f}_-} \right)_k, k \in \Z$. Let $a' = a+1/2, b' = b+1/2$ to define $K_{sp}$ over $\Z^2$. Using the basic identity
\begin{equation} \label{eq:basic_sp}
    \frac{w}{z} \frac{1}{1-wz^{-1}} + \frac{1}{1-wz} = \frac{(1-w^2)}{(1-wz)(1-wz^{-1})}
\end{equation}
we write---brackets standing for coefficient extraction: 
\begin{equation}
    K_{sp}(a', b') = \left[ \frac{z^{a'}}{w^{b'}} \right] \frac{F(z)}{F(w)} \left( \sum_{i \geq 1} (w z^{-1})^i + \sum_{j \geq 0} (wz)^j \right).
\end{equation}
After expanding $F(z), 1/F(w)$ in Fourier series, we obtain
\begin{equation}
    K_{sp}(a', b') = \left( \frac{f_+}{f_- \tilde{f}_-} \right)_{a'} \left( \frac{f_- \tilde{f}_-}{f_+} \right)_{-b'} + \sum_{j \geq 1} \left( \frac{f_- \tilde{f}_-}{f_+} \right)_{-b'-j} \left( \left( \frac{f_+}{f_- \tilde{f}_-} \right)_{a'+j} + \left( \frac{f_+}{f_- \tilde{f}_-} \right)_{a'-j} \right).
\end{equation}

A similar manipulation can be done for $K_o$. To wit, if $a''=a-1/2, b''=b-1/2$ and owing to the identity
\begin{equation} \label{eq:basic_o}
    \frac{z}{w} \left( \frac{1}{1-w z^{-1}} - \frac{1}{1-wz} \right) = \frac{1-z^2}{(1-w z)(1-w z^{-1})},
\end{equation}
one obtains after similar computations
\begin{equation}
    K_{o}(a'', b'') = \sum_{j \geq 0} \left( \frac{f_- \tilde{f}_-}{f_+} \right)_{-b''-j} \left( \left( \frac{f_+}{f_- \tilde{f}_-} \right)_{a''+j} - \left( \frac{f_+}{f_- \tilde{f}_-} \right)_{a''-j} \right).
\end{equation}

\section{Asymptotics of Plancherel-like measures} \label{sec:asymptotics}

In this section we asymptotically analyze certain symplectic and orthogonal analogues of the Poissonized Plancherel measure on partitions. We do this mostly to show, on simple if contrite examples, old and new behavior that one \textit{generically} expects asymptotically. We consider both bulk and edge scaling. 

Fix a parameter $\theta > 0$. Let $pl_\theta$ be the Plancherel specialization of $\Lambda$ defined on the Newton powersums by $p_1(pl_\theta) = \theta$ and $p_k(pl_\theta) = 0, k \geq 2$. It is alternatively defined by $h_n(pl_\theta) = \theta^n/n!$ for $n\geq 0$. Consider the \textit{signed} measures $P_{sp/o}$ defined by
\begin{equation}
    P_{sp} (\lambda) = \exp\left(-\frac{3 \theta^2}{2}\right) \cdot sp_{\lambda} (pl_{2 \theta}) s_{\lambda} (pl_{\theta}), \qquad P_{o} (\lambda) = \exp\left(-\frac{3 \theta^2}{2}\right) \cdot  o_{\lambda} (pl_{2 \theta}) s_{\lambda} (pl_{\theta}).
\end{equation}
These measures are symplectic and orthogonal analogues of the Poissonized Plancherel measure\footnote{The choice of $2\theta$ inside $sp/o$ is to facilitate the computations, but could be replaced by any other parameter different from $0, \theta$ if one is willing to do a change of variables at the end. The empty specialization $0$ for $sp/o$ is also interesting, and related to Section 7 of~\cite{rai}, but we defer this to another time and space.}. For fixed $\theta$ they are not necessarily positive. However as $|\lambda|$ increases $P_{sp/o}(\lambda)$ does become positive since $sp/o_{\lambda}(pl_{2 \theta})$ is a polynomial in $\theta$ with leading monomial $\theta^{|\lambda|}$ having positive coefficient due to the Jacobi--Trudi formulas~\eqref{eq:sp_o_jt} for $sp$ and $o$.

An alternative description of these measures can be given as follows. Recall that for $\mu \subseteq \lambda$, the skew Schur function $s_{\lambda/\mu}$, evaluated at Plancherel, is 
\begin{equation}
    s_{\lambda / \mu} (pl_{\theta}) = \theta^{|\lambda/\mu|} \frac{\dim (\lambda / \mu)}{|\lambda/\mu|!}
\end{equation}
where $\dim (\lambda/\mu)$ is the number of standard Young tableaux of shape $\lambda/\mu$, and $|\lambda/\mu| = |\lambda| - |\mu|$. With $\mu=0$, $\dim \lambda$ is the dimension of the irreducible representation---indexed by $\lambda$---of the symmetric group $S_{|\lambda|}$. Generically, $\dim (\lambda/\mu)$ is also a dimension, that of the (reducible) representation of $S_{|\lambda/\mu|}$ on $\hom_{S_{|\mu|}} (\mu, \lambda)$ where we view $\mu$ and $\lambda$ as the corresponding irreducibles of $S_{|\mu|}$ and $S_{|\lambda|}$---see~\cite[Section 3.5]{cst} for more details. Finally, using~\eqref{eq:sp_o_s} with $pl_{2 \theta}$ replacing $X, \tilde{X}$, we see that
\begin{equation}
    \begin{split}
    P_{sp}(\lambda) \ &\propto \ \theta^{|\lambda|} \frac{\dim \lambda}{|\lambda|!} \sum_{\alpha} (-1)^{|\alpha|/2} (2 \theta)^{|\lambda/\alpha|} \frac{\dim (\lambda / \alpha)}{|\lambda / \alpha|!}, \\
    P_{o}(\lambda) \ &\propto \ \theta^{|\lambda|} \frac{\dim \lambda}{|\lambda|!} \sum_{\beta} (-1)^{|\beta|/2} (2 \theta)^{|\lambda/\beta|} \frac{\dim (\lambda / \beta)}{|\lambda / \beta|!}
    \end{split}
\end{equation}
where the sums range over partitions $\alpha, \beta$ having Frobenius coordinates $(a_1, a_2, \dots|a_1+1, a_2+1, \dots)$ and $(b_1+1, b_2+1, \dots | b_1, b_2, \dots)$ respectively.

Theorem~\ref{thm:corr_lifted} states the measures $P_{sp/o}$ are determinantal with kernels $K_{sp/o}$ governed by the same function $F$:
\begin{equation}
    F(z) = \exp(\theta(z - z^{-1})) = \sum_{n \in \Z} z^n J_n (2 \theta)
\end{equation}
where $J_n$ is a Bessel function of the first kind~\cite{wat}. We also note here that the function $f$ from Section~\ref{sec:gboth} is
\begin{equation}
    f(z) = \exp(\theta(2z+z^{-1})) = \exp(\theta \sqrt{2} (z \sqrt{2} + (z \sqrt{2})^{-1})) = \sum_{n \in \Z} (z \sqrt{2})^n I_n (2 \sqrt{2} \theta)
\end{equation}
where $I_n = \im^{-n} J_n$ is a Bessel function of the second kind.

Throughout this section we use the convention that the kernels $K_{sp/o}$ are defined over $\Z\times\Z$ \footnote{As we have left Fock space long ago.} and further assume that every quantity in need of being an integer is such without surrounding it with floor brackets. 

We start by examining the bulk behavior of both measures. As the argument is almost identical to that of~\cite[Theorem 3]{oko3} or of~\cite[Theorem 2 and Corollary 3]{or}, we only sketch it. As $\theta \to \infty$, we scale the entries $a, b$ in the double contour integral representation of the kernel $K_{sp/o}(a, b)$ linearly:
\begin{equation} \label{eq:bulk_scaling}  
    a \approx \alpha \theta + \a, \qquad b \approx \alpha \theta + \b, \qquad \a, \b \in \Z, \ \alpha \in \R \text{ fixed}.
\end{equation}

We use the relations~\eqref{eq:basic_sp} and~\eqref{eq:basic_o} to split each kernel into a sum of two integrals. The exponentially dominant term in the asymptotics is $\exp ( \theta ( S(z)-S(w) ) )$ for the action
\begin{equation}\label{eq:S_action}
    S(z):= z - z^{-1} - \alpha \log z.
\end{equation}
The function $S$ has two critical points $z_\pm = 2^{-1} (\alpha \pm \sqrt{\alpha^2-4})$ which are real if $|\alpha| > 2$ and complex conjugate on the unit circle if $|\alpha| < 2$---the case $|\alpha|=2$ corresponds to edge behavior and will be treated separately. 

Suppose we are in the complex case $|\alpha| < 2$, and let $\phi_+ := \arg z_+ \in (0, \pi)$. On the unit circle $|z| = 1$, the function $\Re S(z) - \Re S(z_+)$ is zero, with increasing value towards the interior of the unit circle if we're on the left of the vertical line connecting $z_\pm$, and increasing towards the exterior if we're on the right of said vertical line. Recall that in $K_{sp/o}$, the $z$ contour is on the outside of the $w$ contour, and we can take without loss of generality the $z$ contour outside of the unit circle and the $w$ contour on the inside. We then pass both the $z$ contour and the $w$ contour through the two critical points $z_\pm$ in the direction of descent for $z$ and ascent for $w$ so that in the end, on the right of the vertical line connecting $z_\pm$, the $w$ contour is outside the unit circle and the $z$ contour inside---on the left of said vertical line nothing changes. The new integrals will converge exponentially fast to zero but in exchanging the contours we will have picked up some residues. Due to equations~\eqref{eq:basic_sp} and~\eqref{eq:basic_o}, there are two types of residues. The first is the classical one of~\cite[Theorem 3]{oko3} or of~\cite[Theorem 2 and Corollary 3]{or} at $w z^{-1} = 1$ or $z=w$---corresponding to the first summand from~\eqref{eq:basic_sp} and~\eqref{eq:basic_o}---and the contribution here is a single contour integral over an arc connecting $z_+$ to $z_-$, oriented bottom to top and passing to the right of $0$---let us call this contour $\gamma$. This is the \textit{discrete sine} part, the integral in question converging to 
\begin{equation} \label{eq:sine_kernel}
    \frac{1}{2 \pi \im} \int_{\gamma} w^{\b - \a - 1} \dx w = \frac{\sin (\phi_+ (\b-\a))}{\pi (\b-\a)} := K_{sine}(\a, \b).
\end{equation}
The second residue is at $zw=1$. If we're doing the deforming with $w$ first then we will hit all the poles at $w=1/z$ on a contour $\gamma'$ connecting $z_\pm$ from bottom to top which is \textit{inside} the unit circle and to the right of the origin. The corresponding residue
\begin{equation}
    \frac{1}{2 \pi \im} \int_{\gamma'} \exp(2 \theta (z - z^{-1} - \alpha \log z)) z^{-\a -\b - 1} \dx z
\end{equation}
will still converge exponentially fast to zero due to $\gamma'$ being in the area of descent of $\Re S$---inside the unit circle and to the right of the vertical line connecting $z_\pm$. All in all if $|\alpha| < 2$, $K_{sp/o}(a, b)$ converges to the discrete sine kernel $K_{sine}$ of~\eqref{eq:sine_kernel}.  

For $\alpha > 2$, $z_\pm$ are real, $z_+ > 1, 0 < z_- < 1$. We can then just enlarge the $z$ contour to pass through $z_+$ and shrink the $w$ contour to pass through $z_-$ without ever exchanging the contours (or a contour with its reciprocal). A simple inspection of $\Re S$ shows that the resulting integral will go to $0$. For $\alpha < -2$ we have $z_- < -1, z_+ \in (-1, 0)$ and if we enlarge the $w$ contour (respectively shrink the $z$ contour) so that the $w$ passes through $z_-$ and $z$ through $z_+$, the resulting double contour integral will again go to $0$ and we pick up the possibly non-zero usual $z=w$ residue---exactly like in the case $|\alpha| < 2$ but now on a whole circle around the origin. This residue is now $0$ unless $\b=\a$ in which case it is $1$. Both cases are in accordance to and can be seen as limits of~\eqref{eq:sine_kernel} for $\phi_+ = \pi$ if $\alpha < -2$ and $\phi_+ = 0$ if $\alpha > 2$.   

We thus arrive at the following.
\begin{thm} \label{thm:bulk_scaling}
    Let $k_i = \alpha \theta + \a_i$ as $\theta \to \infty$ for fixed $\a_i \in \Z, \alpha \in \R$. Then bulk correlations converge to determinants of the discrete sine kernel:
    \begin{equation}   
        P_{sp/o} ( \{ k_1, \dots, k_n \} \subset \{ \lambda_i - i \}) \to \det \left[ \frac{\sin (\phi_+(\a_i - \a_j)) }{\pi (\a_i - \a_j)} \right]_{1 \leq i, j \leq n}  
    \end{equation}
    where $\phi_{+} = \arg 2^{-1} (\alpha + \sqrt{\alpha^2-4})$.
\end{thm}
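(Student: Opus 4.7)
The plan is to apply steepest descent to the double contour integrals defining $K_{sp/o}(a,b)$ under the bulk scaling $a = \alpha \theta + \a$, $b = \alpha \theta + \b$. I would first use the algebraic identities \eqref{eq:basic_sp} and \eqref{eq:basic_o} to split each kernel into two pieces, so that the rational factor $(1-w^2)/((1-wz)(1-wz^{-1}))$ (respectively $(1-z^2)/\ldots$) becomes a sum of terms of the form $1/(1-wz^{\pm 1})$. This decomposition isolates the dominant $\theta$-dependence in the single exponential $\exp(\theta(S(z) - S(w)))$ with action $S(z) = z - z^{-1} - \alpha \log z$, the remaining algebraic factors $z^{-\a-3/2} w^{\b-1/2}$ being only subexponential.

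Next I would locate the two saddles $z_\pm = \tfrac{1}{2}(\alpha \pm \sqrt{\alpha^2 - 4})$ of $S$ and analyze the geometry of $\Re S$. For $|\alpha| < 2$ these are complex conjugates on the unit circle with $z_+ = e^{\im \phi_+}$, and $\Re S$ vanishes on the unit circle while being positive to the right of the vertical chord through $z_\pm$ and negative to the left. I would deform the $z$-contour along steepest descent rays through both saddles and the $w$-contour along steepest ascent rays, ending with the $z$-contour running inside the unit circle and the $w$-contour running outside on the right half (nothing changes on the left). Since the original configuration has $z$ outside $w$, this exchange forces us to collect residues from the two poles $z = w$ and $zw = 1$ along an arc joining $z_+$ and $z_-$; the remaining deformed double integral then decays exponentially in $\theta$ by construction.

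The residue at $z = w$ reproduces the classical Okounkov--Reshetikhin computation: after collapsing it gives a single contour integral of $w^{\b - \a - 1}$ over an arc $\gamma$ from $z_-$ to $z_+$ passing to the right of the origin, with no remaining $\theta$-dependence, which evaluates directly to the discrete sine kernel $K_{sine}(\a, \b) = \sin(\phi_+(\b-\a))/(\pi(\b-\a))$ of \eqref{eq:sine_kernel}. The residue at $zw = 1$ yields a single contour integral with integrand $\exp(2\theta(z - z^{-1} - \alpha \log z))\, z^{-\a - \b - 1}$ over an arc $\gamma'$ that can be placed inside the unit circle and to the right of the critical chord, hence inside the descent region of $2\Re S$, so that this contribution decays exponentially. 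Cases $|\alpha| > 2$ are easier: the saddles are real, and one simply enlarges or shrinks the contours to pass through the appropriate one without ever exchanging them, giving limit $0$ for $\alpha > 2$ and $\delta_{\a,\b}$ for $\alpha < -2$ (from a full $z = w$ residue around the origin), matching \eqref{eq:sine_kernel} at $\phi_+ = 0$ and $\phi_+ = \pi$.

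The main obstacle I anticipate is the careful bookkeeping of the contour surgery, especially the $zw = 1$ residue, which is a genuinely new feature absent from the Schur--Plancherel case and must be shown to be a negligible correction rather than a competing contribution; this requires checking that the descent arc $\gamma'$ fits strictly inside the sub-level set of $2\Re S$ through $z_\pm$. Once pointwise convergence $K_{sp/o}(a,b) \to K_{sine}(\a, \b)$ is established with a uniform subexponential bound, the convergence of the $n$-point correlation functions is immediate from the multilinearity and continuity of the determinant.
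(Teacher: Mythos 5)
Your proposal follows essentially the same route as the paper: the same split via \eqref{eq:basic_sp}--\eqref{eq:basic_o}, the same action $S$, the same contour surgery picking up the $z=w$ residue (giving the sine kernel) and the $zw=1$ residue (shown exponentially negligible on a descent arc), and the same treatment of $|\alpha|>2$. The only quibble is your description of the sign of $\Re S$ near the unit circle — it vanishes on the whole circle, with the relevant monotonicity being radial and depending on which side of the chord through $z_\pm$ one sits — but the deformation you actually carry out is the correct one.
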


We now examine the edge, when the action $S$ has a double critical point at $z=1$ due to $\alpha \in \pm 2$. For clarity we take $\alpha = 2$, the case $\alpha=-2$ following by symmetry.

Define the following kernels:
\begin{equation}
    \begin{split}
    \Apm(x, y) :&= \int_{0}^{\infty} Ai(x+s) Ai(y+s) \dx s \pm \int_{0}^{\infty} Ai(x-s) Ai(y+s) \dx s \\
        &= \frac{1}{(2 \pi \im)^2} \left( \int_{\gamma_R} \dx u \int_{\gamma_L} \dx v \frac{\exp(u^3/3 - y u)}{\exp(v^3/3 - x v)} \frac{1}{u-v} \pm \int_{\gamma_R} \dx u \int_{\gamma_L} \dx v \frac{\exp(u^3/3 - y u)}{\exp(v^3/3 - x v)} \frac{1}{u+v} \right) \\
        &= \frac{1}{(2 \pi \im)^2} \left( \int_{\gamma'_L} \dx \omega \int_{\gamma'_R} \dx \zeta \frac{\exp(\zeta^3/3 - x \zeta)}{\exp(\omega^3/3 - y \omega)} \frac{1}{\zeta-\omega} \pm \int_{\gamma'_L} \dx \omega \int_{\gamma'_R} \dx \zeta \frac{\exp(\zeta^3/3 - x \zeta)}{\exp(\omega^3/3 - y \omega)} \frac{1}{-(\zeta+\omega)}  \right)
    \end{split}
\end{equation}
with the contours in the second equation defined as follows: $\gamma_R$ goes from somewhere between $\exp(-\pi \im /3) \cdot \infty$ and $-\im \cdot \infty$ at the bottom to somewhere between $\exp(\pi \im /3) \cdot \infty$ and $\im \cdot \infty$ at the top, while staying to the right of $\im \R$; $\gamma_L$ goes from somewhere between $\exp(-2 \pi \im /3) \cdot \infty$ and $-\im \cdot \infty$ at the bottom to somewhere between $\exp(2 \pi \im /3) \cdot \infty$ and $\im \cdot \infty$ at the top, while staying to the left of $\im \R$; $\gamma_L$ stays closer to the imaginary axis $\im \R$ than $-\gamma_R$ so that the latter is completely to the left of the former; and finally $\gamma'_L = -\gamma_R, \gamma'_R = - \gamma_L$. The contours $\gamma_R, \gamma_L$ are those of Borodin--Ferrari--Sasamoto~\cite{bfs}, but we find the primed ones more convenient in the sequel. The Airy function $Ai(x)$ is defined by
\begin{equation}
    Ai(x) = \int_C \exp\left( \frac{\zeta^3}{3} - x \zeta \right) \frac{\dx \zeta} {2 \pi \im}
\end{equation}
over the contour $C$ which goes in a straight line from $\exp(-\pi \im /3) \cdot \infty$ to 0 and then from $0$ to $\exp(\pi \im /3) \cdot \infty$. Note we can deform $C$ into any contour of the aforementioned form $\gamma_R$ without changing the value of the integral. Finally the equivalence between the first two lines is the simple evaluation $(u \pm v)^{-1} = \int_0^\infty \exp(-s(u \pm v))$ for $\Re(u \pm v) > 0$\footnote{It is because of this that $\gamma_L$ stays to the right of $-\gamma_R$.}, while for the last line $(\omega, \zeta) = -(u, v)$. $\Ap$, as defined by the first two lines, is the usual \textit{Airy 2 to 1} kernel of~\cite[Equation 2.7]{bfs}, while $\Am$ is another such variant. 

\begin{thm} \label{thm:edge_scaling}
    Fix $x, y \in \R$ and let 
\begin{equation}
        a \approx 2 \theta + x \theta^{1/3}, \qquad b \approx 2 \theta + y \theta^{1/3}, \qquad \text{as\ } \theta \to \infty.
\end{equation}
Then
\begin{equation}
    \theta^{1/3} K_{sp}(a, b) \to \Ap(x, y), \qquad \theta^{1/3} K_{o}(a, b) \to \Am(x, y)
\end{equation}
uniformly for $x, y$ ranging over compact sets. 
\end{thm}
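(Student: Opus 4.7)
The plan is steepest-descent analysis of the double-contour representations from Theorem~\ref{thm:corr_lifted}. With the Plancherel specialization, $F(z) = \exp(\theta(z-z^{-1}))$. Under the edge scaling $a \approx 2\theta + x\theta^{1/3}$, the $z$-integrand carries the exponential $\exp(\theta S(z) - x\theta^{1/3}\log z)$ with action $S(z) = z - z^{-1} - 2\log z$, and Taylor expansion at $z=1$ gives $S(1) = S'(1) = S''(1) = 0$ and $S'''(1) = 2$: so $z = 1$ is a degenerate cubic saddle, consistent with the bulk-to-edge transition at $\alpha = 2$ observed in the proof of Theorem~\ref{thm:bulk_scaling}.

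The first step is to split each kernel using \eqref{eq:basic_sp} and \eqref{eq:basic_o}, obtaining for each of $K_{sp}$ and $K_o$ a sum of two double contour integrals whose only new prefactors are $(1-wz)^{-1}$ and $(1-wz^{-1})^{-1}$ (times $w/z$ or $z/w$, both of which tend to $1$ under the scaling). Next, I deform the $z$ and $w$ contours to pass through the critical point $z=1$ along steepest directions, which at $z=1$ emanate at angles $\pm\pi/3$ for $z$ (descent of $\Re S$) and $\pm 2\pi/3$ for $w$ (ascent of $\Re S$). Then the local change of variables $z = 1 + \zeta\theta^{-1/3}$, $w = 1 - \omega\theta^{-1/3}$ yields $\theta S(z) - x\theta^{1/3}\log z \to \zeta^3/3 - x\zeta$ and $-\theta S(w) + y\theta^{1/3}\log w \to \omega^3/3 - y\omega$, while the rational prefactors scale as $(1-wz)^{-1} \sim \theta^{1/3}/(\omega-\zeta)$ and $(1-wz^{-1})^{-1} \sim \theta^{1/3}/(\zeta+\omega)$. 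Combining the Jacobian $\theta^{-2/3}$ with the two factors of $\theta^{1/3}$ (one from the prefactor, one from the external normalization), all powers of $\theta$ cancel. After collecting signs and orientations, the two pieces of $K_{sp}$ assemble into $\tfrac{1}{\zeta-\omega} - \tfrac{1}{\zeta+\omega}$, matching $\Ap$, and those of $K_o$ into $\tfrac{1}{\zeta-\omega} + \tfrac{1}{\zeta+\omega}$, matching $\Am$, with the locally deformed contours becoming precisely the Airy contours $\gamma'_R$ for $\zeta$ and $\gamma'_L$ for $\omega$.

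The error analysis has two ingredients. The higher-order remainders in the Taylor expansion of $S$ contribute an exponent of size $O(\theta^{-1/3}|\zeta|^4)$ on the scaled contour, which is absorbed by the cubic decay $\exp(-c|\Im\zeta|^3)$ on the Airy contour and yields uniform convergence on compact $(x,y)$. The main obstacle is the global contour deformation: one must exhibit simple closed contours $\Gamma_z$ (outer) and $\Gamma_w$ (inner), each passing through $z=1$ tangent to the correct steepest directions, with $\Re S(z) \le 0$ on $\Gamma_z$ and $\Re S(w) \ge 0$ on $\Gamma_w$, and equality only at the common point $z = w = 1$. A convenient choice is a locally perturbed unit circle: on $|z|=1$ one has $z - z^{-1} = 2\im \sin\varphi$ and $\log z = \im \varphi$, so $\Re S(e^{\im\varphi}) \equiv 0$; the cubic degeneracy at $z=1$ together with a small inward/outward lens at the critical point then produces the required strict inequalities along the perturbed circles. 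Granted these estimates, the tails away from $z=1$ contribute exponentially small corrections, and the proof reduces to the local analysis above.
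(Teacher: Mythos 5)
Your proposal follows, with only cosmetic differences (a linear rather than exponential local change of variables $z=1+\zeta\theta^{-1/3}$ vs.\ $z=e^{\zeta\theta^{-1/3}}$), the paper's second proof by steepest descent through the cubic saddle at $z=1$ for the action $S(z)=z-z^{-1}-2\log z$, using the same splitting via \eqref{eq:basic_sp} and \eqref{eq:basic_o} and the same identification of the scaled prefactors with $\tfrac{1}{\zeta-\omega}\mp\tfrac{1}{\zeta+\omega}$. The paper also records an alternative first proof that rewrites $K_{sp/o}(a,b)$ as discrete sums of products of Bessel functions $J_{\cdot}(2\theta)$ and invokes Nicholson's approximation $\theta^{1/3}J_{2\theta+x\theta^{1/3}}(2\theta)\to Ai(x)$, which you do not consider.
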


\begin{proof}
We give two proofs of this result, concentrating on the symplectic case and kernel. The orthogonal case follows mutatis-mutandis.

The first proof, in which we sketch the main argument, is similar to the proof of Proposition 4.1 in~\cite{boo}. For a more detailed exposition and much more, see the~\cite[Chapter 2]{rom}---especially Theorem 2.27. A computation similar to that at the end of Section~\ref{sec:gboth} shows
\begin{equation} \label{eq:kernels_bessel}
    \begin{split}
    K_{sp} (a, b) &= \sum_{i \geq 1} J_{a+i} (2 \theta) J_{b+i} (2 \theta) + \sum_{i \geq 0} J_{a-i} (2 \theta) J_{b+i} (2 \theta),\\
    K_{o} (a, b) &= \sum_{i \geq 0} J_{a+i} (2 \theta) J_{b+i} (2 \theta) - \sum_{i \geq 0} J_{a-i} (2 \theta) J_{b+i} (2 \theta).
    \end{split}
\end{equation}

We use Nicholson's approximation---found in~\cite[Section 8.43]{wat} but see~\cite[Lemma 4.4]{boo} for a recent elementary proof. We state the slightly stronger form of Johansson~\cite[Lemma 3.9]{joh4}. Given $M > 0$, there exists a constant $C > 0$ such that
\begin{align}
    |\theta^{1/3} J_{2 \theta + x \theta^{1/3}} (2 \theta) | &\leq C \exp \left(-\frac{|x|}{4} \min\left(\frac{1}{4} \theta^{1/3}, \sqrt{|x|}\right) \right), \qquad \forall x > -M; \label{nic_decay} \\
    \lim_{\theta \to \infty} \theta^{1/3} J_{2 \theta + x \theta^{1/3}} (2 \theta) &= Ai(x), \qquad \text{uniformly for\ } x \in [-M, M]. \label{eq:nic_limit}
\end{align}
Armed with this, we see that as $\theta \to \infty$ the four sums of~\eqref{eq:kernels_bessel}
\begin{equation}
    \theta^{1/3} \sum_{i} J_{a \pm i} (2 \theta) J_{b + i} (2 \theta) \to \int_0^\infty Ai(x \pm s) Ai(y + s) \dx s
\end{equation}
following the same steps as in the proof of~\cite[Proposition 4.4]{boo} or of~\cite[Theorem 1.4]{joh4}\footnote{See also the pedagogical proof of Theorem 2.7(a) from~\cite{rom}.}. The left hand side above becomes a Riemann sum for the right hand side to which it converges as $\theta \to \infty$.

The second proof uses the method of steepest descent, much like the proof for the bulk. As again the arguments are well established by now, we sketch them. Recall that in the integral representation of $K_{sp}$ we can take the contours to be $|z| = 1 + \delta, |w| = 1 - \delta$ for some small $\delta>0$. The ratio $z^{-a} w^b F(z)/F(w)$, in the regime of interest, becomes
\begin{equation} 
    \frac{z^{-a}F(z)}{w^{-b}F(w)} \approx \frac{\exp(\theta (z - z^{-1}-2 \log z)-x \theta^{1/3} \log z )}{\exp(\theta (w - w^{-1}-2 \log w)-y \theta^{1/3} \log w )}
\end{equation}
Using the method of steepest descent with action $S$ from~\eqref{eq:S_action}, we pass the contours through the (technically close to the) double critical point $z_0=1$ such that locally outside 1, the $z$ contour follows the direction of steepest descent---makes angles $\pm \pi/3$ with the horizontal axis and is outside the unit circle, while the $w$ contour follows the direction of steepest ascent---angles $\pm 2 \pi /3$ and inside the unit circle. The main asymptotic contribution comes from a $\theta^{1/3}$ neighborhood of the double critical point $z_0 = 1$. Outside this neighborhood the integral vanishes exponentially if the $z$ contour stays outside the unit circle (so the real part of $S$ is negative) and the $w$ one inside---here and above we note $S'''(z) = 2 > 0$ forces steepest descent for $z$ and steepest ascent for $w$. We thus scale the $z, w$ variables as 
\begin{equation}
    z = \exp(\zeta \theta^{-1/3}), \qquad w = \exp(\omega \theta^{-1/3}). 
\end{equation}
In this scaling we have
\begin{equation}
    \frac{1-w^2}{(1-wz^{-1})(1-wz)} \to \frac{2 \omega}{(\zeta-\omega)(\zeta+\omega)} = \frac{1}{\zeta-\omega} + \frac{1}{-(\zeta+\omega)}
\end{equation}
while the integrand becomes 
\begin{equation}
    \frac{\exp(\zeta^3/3 - x \zeta)}{\exp(\omega^3/3 - y \omega)}
\end{equation}
with $\dx z \dx w / (zw) = \theta^{-1/3} \dx \zeta \dx \omega$ (it is because of this that we need to multiply the discrete kernel by $\theta^{1/3}$). The contours have now the correct local behavior around 0 in the new coordinates $\zeta, \omega$ and we can continue them to $\infty$ in the stated directions of $\gamma'_R$ for $\zeta$ and $\gamma'_L$ for $\omega$ at the cost of exponentially negligible terms. The reader interested in more precise estimates and details can consult, e.g.,~\cite{bfs}. Putting it all together, we arrive at the stated (third line) expression for the limit kernel. 
\end{proof}

\begin{rem}
    The kernels $K_{sp}$ and $K_o$ are related to one another by a duality---Macdonald's involution at the level of proofs. There should be a similar \textit{continuous} duality between
    \begin{equation}  
        \Ap(x, y) = \frac{1}{(2 \pi \im)^2} \int_{\gamma_R} \dx u \int_{\gamma_L} \dx v \frac{\exp(u^3/3 - y u)}{\exp(v^3/3 - x v)} \frac{2 u}{(u-v)(u+v)} 
    \end{equation}
    and 
    \begin{equation}
        \Am(x, y) = \frac{1}{(2 \pi \im)^2} \int_{\gamma_R} \dx u \int_{\gamma_L} \dx v \frac{\exp(u^3/3 - y u)}{\exp(v^3/3 - x v)} \frac{2 v}{(u-v)(u+v)} 
    \end{equation}
    but at the moment we cannot make this precise.
\end{rem}

\begin{rem}
    The same analysis for the original \textit{Poissonized Plancherel} measure $PP(\lambda) \propto s_{\lambda}^2 (pl_{\theta})$ yields the Airy$_2$ kernel $\mathcal{A}_2(x, y) = \int_{0}^{\infty} Ai(x+s) Ai(y+s) \dx s$ in the edge limit. From this perspective $\Apm$ should be viewed as \textit{type B and C} analogues of $\mathcal{A}_2$. This also suggests $\Apm$ should be universal for edge behavior of symplectic and orthogonal measures---regardless of specializations, as long as the limiting action $S$ has a double critical point at the edge, much like $\mathcal{A}_2$ is universal for Schur measures with such properties. Of course this is simply the consequence of the factors $(1-w^2) (1-wz^{-1})^{-1} (1-wz)^{-1}$ and $(1-z^2) (1-wz^{-1})^{-1} (1-wz)^{-1}$ present in the integrals for $K_{sp/o}$, replacing the usual factor $(1-wz^{-1})$ present in the case of Schur measures~\cite{oko}.
 \end{rem} 

As the finite discrete kernels are trace class and the $\Apm$ kernels are conjugate to such, a computation similar to that of the proof of Theorem~\ref{thm:edge_scaling} (see, e.g.,~\cite{boo}) shows that if $a = 2 \theta + \theta^{1/3} s$ then 
\begin{equation}
    (\tr K)_{\ell^2 \{a, a+1, \dots \}} = \sum_{b \geq a} K_{sp/o}(b, b) \to \text{``} (\tr \Apm)_{L^2 (s, \infty)} = \int_s^\infty \Apm(x,x) \dx x \text{''}
\end{equation}
where we take $+$ for $sp$ and $-$ for $o$. That is, Fredholm determinants converge to Fredholm determinants.

\begin{rem}
    The Fredholm determinant $\det(1-\Ap)_{L^2(s, \infty)}$ is the Tracy--Widom \textit{2 to 1} distribution of Borodin, Ferrari and Sasamoto~\cite{bfs}. It exists because the kernel $\Ap$ can be conjugated to a trace class kernel~\cite[Appendix B]{bfs}\footnote{A similar argument shows the existence of $\det(1-\Am)_{L^2(s, \infty)}$.} and cf. op. cit. it appears in the scaling of TASEP with half-flat half-wedge initial conditions at the transition between the two halves. It also appears in the scaling of the point-to-half-line last passage percolation with exponential weights, as per recent work of Bisi--Zygouras~\cite[Theorem 2.1]{bz2} and~\cite{bz1}. It interpolates between the Tracy--Widom GUE $F_2$ and GOE $F_1$ distributions. 
\end{rem} 

We then have the following.

\begin{thm} \label{thm:edge_scaling_2}
    For $n = 1,2,3,4\dots$ fixed, the largest first $n$ points of $\{\lambda_i - i \}$ in the edge scaling of Theorem~\ref{thm:edge_scaling} converge, in the sense of finite dimensional distributions, to the first $n$ points of the $\Ap$ process for $P_{sp}$ and the $\Am$ process for $P_o$. In particular, we have:
    \begin{equation}
    \lim_{\theta \to \infty} P_{sp}\left(\frac{\lambda_1 - 2 \theta}{\theta^{1/3}} \leq s\right) = \det(1-\Ap)_{L^2(s, \infty)}, \qquad \lim_{\theta \to \infty} P_{o}\left(\frac{\lambda_1 - 2 \theta}{\theta^{1/3}} \leq s\right) = \det(1-\Am)_{L^2(s, \infty)}.
    \end{equation}
\end{thm}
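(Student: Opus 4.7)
The plan is to deduce Theorem~\ref{thm:edge_scaling_2} from the pointwise kernel convergence in Theorem~\ref{thm:edge_scaling} by upgrading it to convergence of gap probabilities, and hence of finite-dimensional joint distributions of the top particles, via standard determinantal-process arguments.

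First I would translate the claim into convergence of Fredholm determinants on half-infinite sets. For a determinantal process with kernel $K$ the probability of no particles in a set $B$ is $\det(1 - K)_{\ell^2(B)}$ whenever the latter is well-defined, so the event $\{\lambda_1 \leq 2\theta + s \theta^{1/3}\}$ equals $\det(1 - K_{sp/o})_{\ell^2 B_\theta}$ for $B_\theta := \{a \in \Z : a \geq \lceil 2\theta + s\theta^{1/3}\rceil - 1 \}$, the shift accounting for the fact that $\lambda_1 - 1$ (not $\lambda_1$) is the top particle of $\{\lambda_i - i\}$. For the joint law of the top $n$ particles, standard identities express all such probabilities as linear combinations of derivatives at $u=1$ of $\det(1 - u K_{sp/o})_{\ell^2(t, \infty)}$ for various thresholds $t$ (these are generating functions for the number of particles above $t$). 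All the convergence claims thus reduce to showing
\[
\det(1 - u K_{sp/o})_{\ell^2 B_\theta} \;\longrightarrow\; \det(1 - u \Apm)_{L^2(s, \infty)}
\]
as $\theta \to \infty$, uniformly for $u$ in compact subsets of $\C$.

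Next I would expand both sides as Fredholm series and prove term-by-term convergence. After the change of variables $a = 2\theta + x \theta^{1/3}$, the $k$-th term on the left becomes a Riemann sum of mesh $\theta^{-1/3}$ approximating the $k$-fold integral on the right, whose integrand is the $k \times k$ determinant of $\theta^{1/3} K_{sp/o}(2\theta + x_i \theta^{1/3}, 2\theta + x_j \theta^{1/3})$. By Theorem~\ref{thm:edge_scaling} this integrand converges (uniformly on compacts in $(x_i, y_j)$) to $\det[\Apm(x_i, x_j)]_{i,j=1}^k$, so each term converges by dominated convergence; the Riemann-sum-to-integral transition is as in Proposition~4.1 of~\cite{boo} or Theorem~2.27 of~\cite{rom}. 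What remains is to interchange the limit $\theta \to \infty$ with the summation over $k$ in the Fredholm expansion, for which a summable bound on the $k$-th term, uniform in $\theta$, is needed.

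The main obstacle is precisely this uniform tail bound, and it is what the Nicholson--Johansson estimate~\eqref{nic_decay} is designed to supply: it yields super-polynomial decay of $|\theta^{1/3} J_{2\theta + x \theta^{1/3}}(2\theta)|$ in $|x|$, uniform in large $\theta$, for $x$ bounded below. Combined with the Bessel representation~\eqref{eq:kernels_bessel} of $K_{sp/o}$, summing the geometric-like series in $i$ gives the pointwise bound $|\theta^{1/3} K_{sp/o}(2\theta + x \theta^{1/3}, 2\theta + y \theta^{1/3})| \leq C e^{-c(|x|+|y|)}$ for $x, y \geq s$ and all large $\theta$. Hadamard's inequality then controls the $k$-th term of the Fredholm series by $k^{k/2} C^k / k!$ times a convergent Riemann sum over $[s,\infty)^k$, which is summable in $k$; dominated convergence then yields the Fredholm-determinant convergence, and back-tracking through the formulas for the joint law of the top $n$ points identifies the limit as the corresponding joint law under the $\Apm$ process.
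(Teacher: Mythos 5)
Your proposal is correct and takes essentially the same route the paper does: the paper gives only a one-paragraph sketch after Theorem~\ref{thm:edge_scaling}, noting trace-class properties, trace convergence, and pointing to~\cite{boo} and~\cite{rom} for the standard Bessel-representation/Nicholson-decay/Hadamard-bound/dominated-convergence argument, and you are filling in precisely the details that sketch alludes to.
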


\section{Conclusion and future outlook} \label{sec:conclusion}

In this section we conclude by providing an outlook for where the techniques and results herein announced might be useful.

First, on the Toeplitz+Hankel determinant side, both in~\cite{br1} and in~\cite{be4} there are other such determinants not considered here. In particular the latter discusses constants (in their notation) $E F_{I}, E F_{II}$ which are simple modifications of $Z_{sp}, Z_{o}$ from above. One can expand these new constants easily in Schur functions, call the coefficients of $s_{\lambda}$ e.g. $sq_{\lambda}, sr_{\lambda}$ and then \textit{provided one finds an appropriate Jacobi--Trudi identity}, the other results from~\cite{be4} should follow. 

Second, the symplectic Schur measure afore described, specialized in variables, is related, at least in the continuous (exponential) limit, to last passage percolation in the so-called point-to-half-line regime, as one can see in~\cite{bz1, bz2}. Along with the cited authors, we hope to clarify and address this in a future note.

Third, it would be interesting to clarify the so-called integrable hierarchy behind the symplectic and orthogonal Schur functions. Note Baker~\cite{bak} already discusses this in a limited yet informative way. We do not know in particular if any interesting (to other branches of mathematics) $\tau$-functions come out of Baker's construction.  

Fourth, the symplectic and orthogonal measures from above, specialized with $\rho^+=0$, yield Schur measures described by Rains~\cite[Section 7]{rai}. Our methods already recover Theorems 7.1 and 7.2 from op. cit. Furthermore, simple back-of-the-envelope computations show the asymptotic behavior of such measures (and the associated longest increasing subsequences) is governed by a bulk \textit{sine$\pm$sine} kernel and an edge $\Apm$ kernel. We address this in a future note.

Fifth, one can try to generalize our results to the case of a symplectic/orthogonal equivalent of the Schur \textit{process}~\cite{or}. While the dynamical correlations can be written down naively with minimal effort, the lack of a symplectic/orthogonal branching rule compatible with the corresponding half-vertex operators prevented the author from writing down the result. It is nonetheless possible some traction could be gained here.

Sixth, the results herein obtained along with the work of Bisi--Zygouras~\cite{bz1, bz2} seem to suggest certain otherwise thought to be pfaffian processes also have determinantal representations. Investigating whether this extends to e.g. \textit{all} pfaffian processes of~\cite{rai} would be worthwhile for a simple counting reason: when dealing with pfaffians, one usually has a $2\times 2$ matrix kernel the analysis of which requires at least three different cases (for the diagonal and the $2,1$ entries). Determinantal processes described so far only have one simple kernel. $1 < 3$.

Seventh, a combinatorial proof (of a disguised form) of the Szeg\H{o} Theorem~\ref{thm:szego} above can be found in~\cite{deh}---see also references therein. It uses Haar averages over the orthogonal and symplectic groups whereas we use the Cauchy identity. Presumably a common framework for both would be that of Gelfand pairs and vanishing integrals. We cannot further comment on this in any meaningful way at the moment, but it is a point that deserves clarification.

\bibliographystyle{myhalpha}
\bibliography{symplectique}

\end{document}